\newcommand{\comment}[1]{}
\newcommand{\BEA}{\begin{eqnarray}}
\newcommand{\EEA}{\end{eqnarray}}
\newtheorem{thm}{Theorem}[section]
\newtheorem{prop}[thm]{Proposition}
\title{Kernel-based Prediction of Non-Markovian Time Series}
\author{
  Faheem Gilani \\
  Department of Mathematics \\
  The Pennsylvania State University, University Park, PA 16802, USA\\
  \texttt{fhg3@psu.edu} \\  
  \And
  Dimitrios Giannakis \\
  Courant Institute of Mathematical Sciences \\
  New York University, New York, NY 10012, USA\\
  \texttt{dimitris@cims.nyu.edu}   
  \And
  John Harlim \\
  Department of Mathematics, Department of Meteorology and Atmospheric Science, \\ Institute for Computational and Data Sciences \\
  The Pennsylvania State University, University Park, PA 16802, USA\\
  \texttt{jharlim@psu.edu} 
}
\begin{document}

\maketitle

\begin{abstract}
A nonparametric method to predict non-Markovian time series of partially observed dynamics is developed. The prediction problem we consider is a supervised learning task of finding a regression function that takes a delay-embedded observable to the observable at a future time. When delay-embedding theory is applicable, the proposed regression function is a consistent estimator of the flow map induced by the delay-embedding. Furthermore, the corresponding Mori-Zwanzig equation governing the evolution of the observable simplifies to only a Markovian term, represented by the regression function. We realize this supervised learning task with a class of kernel-based linear estimators, the kernel analog forecast (KAF),
which are consistent in the limit of large data. In a scenario with a high-dimensional covariate space, we employ a Markovian kernel smoothing method which is computationally cheaper than the Nystr\"{o}m projection method for realizing KAF. In addition to the guaranteed theoretical convergence, we numerically demonstrate the effectiveness of this approach on higher-dimensional problems where the relevant kernel features are difficult to capture with the Nystr\"{o}m method. Given noisy training data, we propose a nonparametric smoother as a de-noising method. Numerically, we show that the proposed smoother is more accurate than EnKF and 4Dvar in de-noising signals corrupted by independent (but not necessarily identically distributed) noise, even if the smoother is constructed using a data set corrupted by white noise. We show skillful prediction using the KAF constructed from the denoised data. 
\end{abstract}

\keywords{Kernel Analog Forecast \and delay-embedding \and Mori-Zwanzig formalism \and Nystr\"{o}m method \and Markovian Kernel Smoothing \and nonparametric smoother} 

\section{Introduction} \label{sec:intro}

A long-standing issue in the applications of dynamical systems is to predict time series of observables given partial observations. This problem has classically been studied from various angles under different names in the literature (i.e. reduced-order modeling, closure modeling, subgrid parameterization, etc), but more recently it has also been viewed as a machine learning problem. In particular, at the core of this modeling problem is a supervised learning task to find a map that takes appropriate covariate data (an observable in the past and/or present times) to the desired response function (an observable at the future times). When the covariate data is a delay-embedded observable, the target map provides a non-Markovian prediction model. The realizations of this problem with state-of-art machine learning algorithms involving deep/recurrent neural networks have reported superb numerical performances even when the underlying dynamics are highly nonlinear and high-dimensional \cite{vlachas2018data,ma2018model,pan2018data,HJLY:19}. In the context of partially known dynamics, the recent work in \cite{HJLY:19} formulated the target function as a conditional expectation associated with an appropriate probability space and showed that the corresponding supervised learning framework (which is similar to the one proposed in \cite{pan2018data,jh:20}) produces an approximate closure model whose solutions converge (strongly) to those of the underlying dynamics for finite time when both models are initialized with the same initial conditions. Building on this positive result, one of the goals of this paper is to understand the regression problem corresponding to the supervised learning task from the viewpoint of dynamical systems theory and reduced-order modeling.

Due to the classical theory of dynamical systems, this modeling framework is closely related to the delay-embedding theorem \cite{TAKENS2010345} which has served as a foundation for attractor reconstruction from time series. We will argue that when the embedding theorem is satisfied, the regression (or target) function is theoretically consistent with the component of the flow induced by the delay-coordinate map. From the reduced-order modeling viewpoint, the same learning task can be formulated as a problem of deriving, from first principles, a set of effective equations that determines the evolution of the observable (e.g., \cite{chk:02,cho2016numerical,parish2017dynamic,chuli:2019,price2019renormalized}). 

The Mori-Zwanzig (MZ) formalism \cite{Zwanzig61,Mori65} has been proposed by this community as a natural framework for deriving such a set of effective equations for forecasting the time series of partially observed dynamical systems. The appeal of using the MZ formalism is that the resulting system is represented by an equation that involves projected linear evolution operators. In contrast to geometrical state-space approaches, the operator-theoretic approach focuses on the induced linear action of dynamical systems on appropriately chosen spaces of observables despite the nonlinearity of the flow map. In the context of the MZ formalism, this allows one to compartmentalize the contribution of the observable at the present time (the Markovian term), the observable in the past (the memory/non-Markovian term), and the orthogonal dynamics of the trajectory of the observables at the future times with a collection of linear operators. 

While such a representation is attractive for understanding the modeling mechanism, it may not be easily translated into an efficient numerical method. This issue arises due to the fact that the MZ formula states the dependence of the observable at the future time on the entire history of observables and the initial condition. Besides, it is usually difficult to specify the memory kernel as it requires the solution of the high-dimensional orthogonal dynamics \cite{chk:02,gouasmi2017priori}.  Ultimately, the desired computational objective is to have a finite memory approximation.  This issue has given rise to many parametric approximations of the memory kernel, such as the delta function approximation \cite{hijon2010mori}, Krylov subspace approximation \cite{LiXian2014}, series expansion \cite{LiStinis2019,zhu2018faber}, and rational approximation \cite{hl:15}, just to name a few. While these approaches have shown positive results when addressing specific applications, they either require the knowledge of the full model and/or they are subjected to modeling error when the memory kernel is not adequately represented by the specified parametric model. We will argue that if the hypotheses of the delay-embedding theorems are satisfied, the representation of the MZ equation with the projection operator obtained through the corresponding regression framework can be simplified to a computationally tractable model. In particular, the MZ equation consists of only the ``Markovian'' term associated with the delay-embedded sequences, which is exactly the regression function given by the supervised learning framework. The connection between supervised learning, delay-embedding theory, and the MZ formalism suggests that the regression framework is indeed a natural approach for predicting time series of partially observed dynamics.
We should point out that this connection partially explains the empirical successes reported in
 \cite{vlachas2018data,ma2018model,pan2018data,HJLY:19} since they all adopted this regression modeling paradigm. One unexplained component of these empirical successes is the consistency of their estimators.  
In these papers, the authors approximated the target function using a neural network model (which is in the form of a composition of activation functions) which depends nonlinearly on possibly a very large number of parameters (depending on the depth and width of the neural network architecture). Thus, the training phase often involves a nonlinear, highly non-convex, optimization problem, and finding the global optimizer for such a problem can be a difficult task given that most solvers convergence is guaranteed locally. While this is an interesting direction, we will not explore it here. 
In this paper, we study a class of linear estimators that can be translated into computational algorithms with theoretical guarantees. In particular, we consider the kernel analog forecast (KAF) which has found applications in finance \cite{tay2001application} and climate sciences \cite{zhao2016analog,alexander2017kernel,comeau2017data,comeau2019predicting}. KAF is a kernel regression method designed for the purpose of predicting time series generated by an observable of a dynamical system. The term ``analog'' refers to the fact that KAF is a generalization of the classical analog forecasting method proposed by Lorenz \cite{lorenz1969atmospheric}, for which the prediction is determined based on the affinity of the present states and the historical analog. In this context, the so-called ``kernel trick'' allows one to identify the analogs (feature space) with an appropriately chosen kernel. This, in turn, allows one to access an estimator that lies in a Reproducing Kernel Hilbert Space (RKHS) induced by the associated kernel features, with universal approximation properties. A key advantage of the RKHS formulation is that properties of the elements of the space are inherited by corresponding properties of the kernel. In particular, if the kernel is bounded, then functions in the RKHS are also bounded. Likewise, functions in an RKHS inherit the regularity of the kernel. This important property allows one to establish uniform convergence of the estimator, which justifies the use of KAF as an interpolator. In the context of dynamical systems forecasting, the natural function space (e.g., an $L^2$ space associated with an invariant measure) is usually not known explicitly, yet relationships between kernel integral operators and RKHSs allow one to empirically access the subspace of $L^2$ through a set of orthogonal basis functions corresponding to ordered eigenvalues. In this case, there is a natural mapping of the $L^2 $ basis vectors corresponding to nonzero eigenvalues to orthogonal RKHS functions, and, under appropriate positivity conditions on the kernel, the latter span a dense subspace of the corresponding $L^2$ space. With orthogonality at hand, one can control the accuracy of the estimate by a finite eigenbasis representation and, simultaneously, avoid the large matrix inversion problem with the radial-type kernels. Finally, the RKHS structure allows one to evaluate the estimator on new data points using a classical interpolator, the Nystr\"om projection method. It should be noted that this construction does not require that the covariate time series is Markovian, and is therefore well suited to forecasting under partial observations; e.g., see  \cite{BurovEtAl20} for applications of KAF to prediction of slow components of multiscale systems exhibiting averaging or homogenization. 

While KAF is theoretically sound \cite{alex2019operatortheoretic}, it may face practical limitations, especially when both the covariate space and the support of the pushforward of the invariant measure on the covariate space are high dimensional. This issue is mainly due to lack of guarantees that the leading eigenfunctions induced by a generic kernel on a high-dimensional covariate space adequately capture the response (predictand) variable of interest. To alleviate this limitation, while also reducing computational complexity, we propose to realize KAF with a \emph{kernel smoothing} technique, whose basic idea is to apply a discrete convolution of a Markov operator on the response functions. We show that the proposed kernel smoothing method is a consistent estimator of the optimal regression function, i.e., the conditional expectation of the response given the covariate data. Using the variable-bandwidth kernels introduced in \cite{bh:15vb}, we numerically demonstrate the effectiveness of kernel smoothing compared to the Nystr\"om method in estimating the full discrete MZ equation in situations where the covariate space is relatively high-dimensional. On the other hand, when the covariate space is low dimensional, the Nystr\"om method is generally a better choice since the response variable is more likely to be well represented by the leading empirical kernel features. 

Another critical issue that often arises in practical applications is that the available observables are subjected to noises (of possibly unknown nature). This poses a question in the accuracy of the KAF estimators since the noises in the response and covariate data may yield an ill-posed regression problem. In this paper, we propose a \emph{non-parametric smoother}, constructed using the Nystr\"om projection method, to denoise observables corrupted by independent (but not necessarily identically distributed) noises. In our applications, we will show the effectiveness of the proposed smoother in denoising signals corrupted by various noise types, including time varying noise, even if the smoother is constructed using a data set corrupted by independent and identically distributed  (i.i.d.) Gaussian noise. From our numerical tests, we will find that the proposed smoother produces more accurate estimates than two popular data assimilation methods that are presently used in operational weather forecasts: the Ensemble Kalman filter \cite{evensen:94} and the 4D-Variational approach \cite{lorenc:86}, both of which require the true governing equations of the observed components. Using the smoothed data, we numerically verify that the kernel smoothing method is effective in predicting the response variable. We will show that this blended ``projection-smoothing" approach is able to produce a reasonably accurate prediction from purely noisy observables.

This paper is organized as follows. In Section~\ref{sec:regression}, we review the kernel-based regression framework for supervised learning tasks. In Section~\ref{NystromEst}, we discuss the Nystr\"om projection method. While the presentation follows closely that in \cite{alex2019operatortheoretic}, in the current discussion, we do not present the regression problem for time series generated by ergodic dynamical systems and only describe it on i.i.d.\ training data. We complete the discussion in Section~\ref{NystromEst} with a simple statistical error bound. In section~\ref{sec::kernelsmooth}, we present the kernel smoothing method, and prove its consistency and associated error bounds using variable bandwidth kernels \cite{bh:15vb}. In Section~\ref{sec:prediction}, we discuss the problem of predicting observables of time series generated by dynamical systems. Since the only available training data is the time series of the relevant observables, we briefly review the discrete MZ formalism for reduced-order modeling in Section~\ref{sec31}. In Section~\ref{sec32}, we focus on estimating the solution operator of the projected discrete MZ equation with the KAF estimator. We demonstrate the performance of the estimator on a Hamiltonian system and the five-dimensional chaotic Lorenz-96 dynamical system. In Section~\ref{sec:estMZ}, we discuss the connection of the proposed nonparametric regression framework with the delay-embedding and MZ formalism. In particular, we will show that if the hypothesis in the delay-embedding theory is satisfied, the regression function is indeed a component of the flow map. Furthermore, the MZ equation derived using the projection operator obtained by the regression framework consists of only the ``Markovian'' term and it is exactly represented by the corresponding regression function. Supporting numerical examples on the two same dynamical systems are given. In Section~\ref{sec4}, we consider data corrupted by independently distributed noises. A non-parametric smoother based on the Nystr\"om projection method is presented as a denoising method in Section~\ref{subsection41}. Subsequently, in Section~\ref{subsection42}, we numerically verify the prediction skill of the KAF estimator when it is trained using the smoothed data. In Section~\ref{sec5}, we close this paper with a summary and outlook of open problems.

\section{Nonparametric regression}\label{sec:regression}

Given spaces $\mathcal{X}$ and $\mathcal{Y}$, a basic problem of supervised learning is to construct a map $F:\mathcal{X}\to\mathcal{Y}$  from samples of labeled data, $\{(x_i,y_i) \in \mathcal{X}\times\mathcal{Y} \}_{i=1,\ldots,N}$, such that $F(x_i)$ optimally approximates $y_i $ in a suitable sense. Here, we require that $\mathcal{Y}$ be a Hilbert space so that we can apply orthogonal projections, as well as compute expectations and other statistical functionals. On the other hand, we allow $\mathcal{X}$ to be nonlinear.  In order for the  target function $F$ to be predictive, we relate $x_i$ and $y_i$ by assuming that they are realizations of random variables $X$ and $Y$ with common domain $\Omega$. We assume that $\Omega$ is a probability space equipped with a $\sigma$-algebra $\mathcal{B}(\Omega)$ and probability measure $\mu$. We call $\mathcal{X}$ the covariate space and $\mathcal{Y}$ the response space. The corresponding maps $X$ and $Y$ are called the covariate map and the response map, respectively. 

Consider the Hilbert spaces $H=\big\{f:\Omega\rightarrow \mathcal{Y}\mid \int_{\Omega} f^2(\omega) \ d\mu(\omega) <\infty\big\}$, $V=\{g:\mathcal{X}\to  \mathcal{Y}: g\circ X\in H\}$, and $H_X=\{f\in H: f = g \circ X \mbox{ for some } g\in V\}$. Note that $H=L^2(\mu)$ and $V=L^2(\nu)$ where $\nu= X_*\mu$ is the pushforward of $\mu$ via $X$. Moreover, $H_X$ is the Hilbert subspace of $L^2(\mu)$ that contains equivalence classes  of square-integrable functions which are measurable with respect to the $\sigma$-algebra generated by $X$. In general, there are many ways to construct a predictive map $F$.  The least-squares approach is to construct an $F$ that minimizes the mean square error. A standard result from statistics is that this estimator is given by the regression function, which is also known as the conditional expectation function. That is,
\BEA
\mathbb{E}[Y|\cdot] = F := \arg \min_{g\in V} \| Y - g\circ X \|^2_{H}, \label{optimal}
\EEA
where the conditional expectation $\mathbb{E}[\cdot |X]$ can be seen as an orthogonal projection of $H$ onto $H_X$. In this paper, we will denote the orthogonal projection $P:H\to H_X$ as the conditional expectation $\mathbb{E}[\cdot |X]$. In appropriate context, we will also use $P:H\to S_X\subseteq H_X$, to denote an arbitrary orthogonal projection onto its range space, $S_X = \mbox{ran}(P)$ such that $S_X^\perp = \mbox{null} (P)$ and $H=S_X\oplus S_X^\perp$, where the orthogonality is defined with respect to the inner product of $H$.

When $X$ is not injective, as in many applications, one cannot approximate the response $Y\in H$ to arbitrary precision by elements of $H_X$. However, one can still construct an optimal estimator of $Y$ using the target function $F\in V$. In the remainder of this section, we discuss two methods for estimating $\mathbb{E}[Y|\cdot]$ from samples of labeled data $\{(x_i,y_i) \in \mathcal{X}\times\mathcal{Y}, \}_{i=1,\ldots,N}$. The first one is the Nystr\"om method which is an interpolation of an eigenbasis representation of the estimator. The second method is the kernel smoothing that employs a convolution operation associated with a Markov kernel. For the remainder of this section, we restrict our discussion to real-valued functions, so that $\mathcal{Y}=\mathbb{R}$ and $H = \{f:\Omega\rightarrow \mathbb{R}\mid \int_{\Omega} f^2(\omega) \ d\mu(\omega) <\infty\}$. Since our applications involve $\mathcal{Y}=\mathbb{R}^n$, a componentwise generalization to the finite-dimensional vector-valued case is immediate.

\subsection{Nystr\"om method} \label{NystromEst}

If $V$ is equipped with an orthonormal basis $\{u_j\}_{j\in \mathbb{N}}$ and $X$ is injective, then $\{\phi_j = u_j\circ X\}_{j\in \mathbb{N}}$ forms an orthonormal basis of $H$. In this case, any $Y\in H$ can be arbitrarily estimated, in $H$-norm, by 
\BEA
\mathbb{E}_L[{Y|X}]: = \sum_{j=0}^{L} \langle Y, \phi_j \rangle_{H} \phi_j,   \label{finitesum}
\EEA
up to any desirable precision by taking $L\to\infty$. Due to the properties of orthogonal projection, the estimator 
\BEA
\mathbb{E}_L[Y|\cdot] = \sum_{j=0}^{L} \langle Y, \phi_j \rangle_{H} u_j, \label{estimator}
\EEA
is an optimal estimator from $\mbox{span}\{\phi_0,\ldots,\phi_L\}\subset H$. As mentioned above, when $X$ is not injective, $\mbox{span}\mbox\{\phi_j\}_{j\in \mathbb{N}}\subsetneq H$ so one cannot recover arbitrary target functions $Y\in H$. However, $\mathbb{E}_L[Y| \cdot]$  is a consistent estimator of $\mathbb{E}[Y|\cdot]\in V$ so that $\lim_{L\to\infty}\mathbb{E}_L[Y|\cdot] = \mathbb{E}[Y|\cdot]$ in $V$. 

A practical issue in employing the estimator \eqref{estimator} is that orthonormal bases  of $H$ as well as $V$ are not available. The whole point of nonparametric regression is to construct an estimator for $\{\phi_0,\phi_1,\ldots\}$ from the random samples of observables $\{x_i:i=1,\ldots, N\}$, where $x_i=X(\omega_i)$ are realizations of the covariate map $X$.  Kernel-based algorithms \cite{cl:06,bh:15vb} are often used to obtain the function value $u_j(x_i) = u_j\circ X(\omega_i) =\phi_j(\omega_i)$, which can subsequently be used to estimate the inner product in \eqref{estimator}. For our purposes, we also need to evaluate the estimator in \eqref{estimator} on new covariate data that do not lie in the (finite) training data set. This evaluation can be done using an interpolation scheme such as the Nystr\"om method that extends $u_j$ on new covariate data disjoint from the finite sample of observations. To justify the validity of such an interpolation method, uniform convergence of the estimator is usually required rather than $V$-norm convergence. 

One way to ensure uniform convergence is to construct an estimator in a reproducing kernel Hilbert space (RKHS) $\mathcal{H}$ of continuous functions such that $\mathcal{H}$ is dense in $H_X$.  In particular, let $k:\Omega\times\Omega\to\mathbb{R}$ be the pullback of a kernel $\kappa:\mathcal{X}\times \mathcal{X}\to\mathbb{R}$ on the covariate space. That is, $k$ is symmetric positive definite and $k(\omega,\omega') = \kappa(X(\omega),X(\omega')))$. By the Moore-Aronszajn theorem, there exists a unique Hilbert space $\mathcal{H}$ (the RKHS),  of real valued functions $f:\Omega\to \mathbb{R}$ with the reproducing property:  $\mathcal{H}=\overline{\mbox{span}\{k(\omega,\cdot), \forall\omega\in\Omega\}}$ and every $f\in\mathcal{H}$ and $\omega\in \Omega$ satisfies $f(\omega)=\langle k(\omega,\cdot),f\rangle_{\mathcal{H}}$. Since the kernel $k$ is a pullback kernel of $\kappa$,  every function $f\in\mathcal{H}$ can be expressed as $f=g\circ X$ for some continuous function $g:\mathcal{X}\to\mathbb{R}$. If  $\Omega$ is compact and $k$ is continuous, one can show that $\mathcal{H}$-norm convergence implies uniform convergence so that  $\mathcal{H}\subset C(\Omega)$. For non-compact domains, a bounded kernel ensures that $\mathcal{H} \subset C_b(\Omega)$ \cite{christmann2008support}. 

While it is convenient to represent functions in $\mathcal{H}$ as a linear superposition of kernel sections, namely, $f = \sum_{i=1}^\infty a_i k(\omega_i,\cdot)$ with $\omega_i\in\Omega$, empirical representations involve a partial summation of $N$ terms, where $N$ denotes the number of training samples. For large datasets, as in our applications, specification of the coefficients $a_i$ involves an inversion of a large matrix and the repetitive function evaluation is numerically expensive. If a radial-type kernel is chosen, as in many applications, then we arrive at the at the so-called kernel ridge regression or radial basis function interpolation, depending on the literature. The estimator in \eqref{estimator} is  proposed as an alternative to avoid this computational issue by leveraging the inner product structure of $H$. To that end, consider the reproducing kernel $k$ from the perspective of an integral operator $K_\mu: H \to \mathcal{H}$ defined as
\BEA
K_\mu f = \int_\Omega k(\cdot,\omega) f(\omega) d\mu(\omega), \label{integraloperator}
\EEA
where $\mu$ is assumed to be compactly supported on $M\subset \Omega$. This is a compact operator with adjoint $K_\mu^*:\mathcal{H}\to H$ that is also compact. By the spectral theorem, the compact, self-adjoint and positive-definite integral operator $G_\mu:=K_\mu^*K_\mu : H\to H$ has eigenvalues $ \lambda_0 \geq \lambda_1 \geq \cdots \searrow 0^+$ so that the corresponding eigenfunctions $\{ \phi_0, \phi_1, \ldots \} $ form an orthonormal basis of $H$. In fact, defining, $\psi_j = K_\mu\phi_j/\lambda_j^{1/2}$ for $\lambda_j>0$, we have,
\BEA
\langle \psi_i,\psi_j \rangle_{\mathcal{H}} = \frac{1}{\lambda_i^{1/2}\lambda_j^{1/2}} \langle K_\mu \phi_i,K_\mu \phi_j \rangle_{\mathcal{H}} = \frac{1}{\lambda_i^{1/2}\lambda_j^{1/2}} \langle K^*_\mu K_\mu \phi_i, \phi_j \rangle_{H} = \frac{\lambda_i^{1/2}}{\lambda_j^{1/2}} \langle \phi_i, \phi_j \rangle_{H} = \delta_{ij},\nonumber
\EEA
which means that $\{\psi_0,\psi_1,\ldots\}$ is an orthonormal set in $\mathcal{H}$. By Mercer's theorem, we have an explicit representation $k(\omega,\omega') = \sum_{j=0}^\infty \lambda_j\varphi_j(\omega)\varphi_j(\omega')=
\sum_{j=0}^\infty\psi_j(\omega)\psi_j(\omega')$, converging uniformly for $ (\omega,\omega') \in M \times M$, where $\varphi_j = \lambda_j^{-1/2}\psi_j $ denotes the continuous representative of eigenfunction $\phi_j$. 
The so-called ``kernel trick'' specifies an explicit choice of kernel $k$, such as the Gaussian kernel, to avoid computing the $\ell_2$ inner-product between feature vectors $(\psi_0(\omega),\psi_1(\omega),\ldots)$ and $(\psi_0(\omega'),\psi_1(\omega'),\ldots)$. Our perspective is to rely on the orthogonality of the eigenbasis to approximate  the target function of interest  through the representation in \eqref{estimator} and use the RKHS theory to establish the convergence of the estimator as $L\to\infty$. 

One of the most important aspects of the integral operator $K_\mu$ is that we can define an interpolation  (Nystr\"{o}m) operator $\mathcal{N}_\mu: D(\mathcal{N}_\mu)\to \mathcal{H}$ as $\mathcal{N}_\mu\phi_j := \psi_j/\lambda_j^{1/2}= K_\mu \phi_j/\lambda_j:=\varphi_j$, whose domain $D(\mathcal{N}_\mu) = \{f =\sum c_k\phi_k \in  H| \sum_k c_k^2/\lambda_k<\infty\}$ contains functions of higher regularity than arbitrary elements of $H$. Note that if $D(\mathcal N_\mu)$ is equipped with the norm $ \lVert f \rVert^2 = \sum_k c_k^2/\lambda_k$, then it is isometrically isomorphic to $\mathcal H(M)$, the restriction of $ \mathcal H$ to the support $M$. Notice that the operator $\mathcal{N}_\mu$ maps the eigenfunction $\phi_j\in D(\mathcal{N}_\mu)$ to the continuous function $\varphi_j$. As a result,  $f\in D(\mathcal{N}_\mu)$  has a continuous representation $\mathcal{N}_\mu f = \sum_j c_j \mathcal{N}_\mu \phi_j = \sum_{j} c_j \varphi_j$. Moreover, the map $K_\mu^*$ is a left inverse of $\mathcal{N}_\mu$ since
\BEA
K_\mu^*\mathcal{N}_\mu f = \sum_{j} c_j K_\mu^* \varphi_j = \sum_{j} c_j K_\mu^* K_\mu \frac{\phi_j}{\lambda_j} = \sum_{j} c_j\phi_j = f.\label{identitymap}
\EEA
This means that the map $K_\mu^*\mathcal{N}_\mu: D(\mathcal{N}_\mu)\to \overline{\mbox{ran}K_\mu}$ identifies functions in $D(\mathcal{N}_\mu)$ with their continuous representation in $\mathcal{H}$ through the Nystr\"{o}m operator, as a function in $\overline{\mbox{ran}K_\mu}$. 

In our case, the target function is $\mathbb{E}[Y|\cdot]\in V$ or $\mathbb{E}[Y|X]\in H_X$. Thus we can consider the operator \eqref{integraloperator} but with domain $H_X$. In this case,  an orthonormal set of continuous functions $\{\psi_0,\ldots,\psi_L\}$ in $\mathcal{H}$ satisfies $\psi_j = u_j \circ X$ for some continuous functions $\{u_0,\ldots,u_L\}$ that can be approximated from the covariate data. Using this basis, for each $\mathbb{E}_L[Y|X]\in D(\mathcal{N}_\mu)$, one can build an estimator for $\mathcal{N}_\mu \mathbb{E}_L[Y|X] \in \mathcal{H}$ which can be represented as
\BEA
\mathcal{N}_\mu \mathbb{E}_L[Y|\cdot ] =\sum_{j=0}^L \langle Y, \phi_j \rangle_{H} \frac{u_j}{\lambda_j^{1/2}}.\label{nystromEm}
\EEA
It is important to note that if the reproducing kernel $k$ of the RKHS $\mathcal{H}$ is a pullback of a strictly positive definite kernel $\kappa:\mathcal{X}\times\mathcal{X}\to\mathbb{R}$, then the domain $D(\mathcal{N}_\mu)$ is dense in $H_X$. To see this, take any function $f \in H_X$ and, since $\mbox{span}\{\phi_0,\phi_1,\ldots\}$ is dense in $H_X$, we have that $f=\sum_k c_k\phi_k$, where each eigenfunction is associated with a strictly positive eigenvalue. Furthermore,
\BEA
\sum_{k=0}^\infty \frac{{c_k}^2}{\lambda_k} = \sum_{k=0}^\infty \frac{{c_k}^2}{\lambda_k} \langle \phi_k,\phi_k \rangle_{H_X}  = \sum_{k=0}^\infty \frac{{c_k}^2}{\lambda_k} \langle K_\mu \phi_k,K_\mu \phi_k \rangle_{\mathcal{H}} =  \sum_{k=0}^\infty{c_k}^2 \langle  \psi_k, \psi_k \rangle_{\mathcal{H}}= \sum_{k=0}^\infty {c_k}^2 = \lVert f \rVert_{H_X}^2 <\infty,\nonumber
\EEA
and we conclude that any function $f\in H_X$ can be approximated by a function in $D(\mathcal{N}_\mu)$ at arbitrary precision. From \eqref{identitymap}, one can see that the operator $K_\mu^*\mathcal{N}_\mu: D(\mathcal{N}_\mu) \to D(\mathcal{N}_\mu)$ is an identity map (a bounded operator). By the bounded linear transformation theorem, the closed extension of $K_\mu^*\mathcal{N}_\mu$ is the identity map on $\overline{D(\mathcal{N}_\mu)}=H_X$. This means that any function in $H_X$ can be approximated to arbitrary precision in $H$-norm by a function in $K^*_\mu\mathcal{H} = D(\mathcal N_\mu)$. In particular, as $L\to\infty$, the estimator in \eqref{nystromEm} converges to the target function in $H$-norm, i.e., $\lim_{L\to\infty}K^*_\mu\mathcal{N}_\mu \mathbb{E}_L[Y|X] = \mathbb{E}[Y|X]$. If it now happens that $\mathbb{E}[Y|X] $ has a representative in $\mathcal H$, then the estimator converges to that representative in $\mathcal H$-norm, and thus uniformly, on the support of $\mu$.

As mentioned above, in practice, we have no access to the basis functions $\{\phi_0,\ldots,\phi_L\}$ or $\{u_0,\ldots,u_L\}$. Given the pairs of labeled data points $\{ (x_i,y_i )\}_{i=1,\ldots,N}$, where $x_i$ are i.i.d.~samples of $X$, we first describe an empirical estimate of $\phi_j(\omega_i)=u_j(x_i)$. Let $G_{\mu_N}:=K_{\mu_N}^*K_{\mu_N}$, where $K_{\mu_N} : H_N \to \mathcal H$ and $K^*_{\mu_N} : \mathcal H \to H_N$ are defined as in \eqref{integraloperator} and the corresponding adjoint with $H= L^2(\mu_N)$ replaced by $ H_N := L^2(\mu_N) $. Here, $\mu_N = \sum_{j=1}^N \delta_{\omega_j} / N $ is the discrete sampling measure, and $L^2(\mu_N)$ the corresponding finite-dimensional Hilbert space equipped with the inner product $\langle f,g \rangle_{H_N}=\frac{1}{N}\sum_{i=1}^N f(\omega_i)g(\omega_i)$. For simplicity of exposition, we will assume that all sampled states $ \omega_i $ are distinct, so $H_N$ is an $N$-dimensional Hilbert space, isomorphic to $\mathbb R^N $ equipped with a normalized dot product. In that case, the operator $G_{\mu_N}$ is represented by an $N\times N$ kernel matrix $\mathbf{G}_N =[\langle e_{i,N},G_{\mu_N}e_{j,N}\rangle_{H_N}] =[\kappa(x_i,x_j)]$, where $e_{j,N} $ are the standard orthonormal basis vectors of $H_N$ with $e_{j,N}(\omega_i) = N^{1/2}\delta_{ij}$.

Let $\{\lambda_{j,N},\phi_{j,N}\}$ be the $j$th eigenvalue and eigenvectors of $\mathbf{G}_N$, respectively. It is well known that the approximation of $G$ by $G_n$ is spectrally consistent \cite{VonLuxburgEtAl08}; that is the sequence of eigenvalues $\lambda_{j,N} \to \lambda_j$, as $N\to\infty$. 
Moreover, the continuous representative, $\mathcal{N}_{\mu_N}\phi_{j,N} = \psi_{j,N}/\lambda_{j,N}^{1/2}$ converges to $\mathcal{N}_{\mu}\phi_{j} = \psi_j/\lambda_j^{1/2}$ as $N\to \infty$ in $\mathcal{H}$.
Denoting $\vec{y}=(y_1,\ldots,y_N)^\top\in\mathbb{R}^N$, we have
\BEA
\langle \vec{y},\phi_{j,N} \rangle_{L^2(\mu_N)} = \frac{1}{N} \sum_{i=1}^N {y_i} \phi_{j,N}(\omega_i) = \int_{\Omega} Y(\omega)\mathcal{N}_{\mu_N} \phi_{j,N}(\omega) d\mu_N(\omega)\longrightarrow \int_{\Omega} Y(\omega) \mathcal{N}_{\mu} \phi_{j}(\omega)d\mu(\omega) = \langle Y,\phi_j \rangle_{H},\nonumber
\EEA
as $N\to\infty$, where we have used the law of large numbers for i.i.d. samples. For each $j$, 
\BEA
\mathbb{E}_\mu  [\langle \vec{y},\phi_{j,N}\rangle_{L^2(\mu_N)} ]= \frac{1}{N} \sum_{i=1}^N \mathbb{E}_\mu [{Y}\mathcal{N}_{\mu_N} \phi_{j,N}] = \mathbb{E}_\mu [{Y} \mathcal{N}_{\mu_N}\phi_{j,N}] =\mathbb{E}_\mu [{Y} \phi_{j}] +\mathcal{O}(\delta),\label{bias}
\EEA
where $\delta$ is an error bound of the eigenfunction estimation. In the proposition below, we will specify $\delta$ on a manifold without boundary based on the $L^2$ result from \cite{GTetal2019}.
The standard Monte-Carlo error suggests that
\BEA
\mathbb{E}_\mu  \Big[\big(\langle \vec{y},\phi_{j,N}\rangle_{L^2(\mu_N)} -\mathbb{E}_\mu  [Y\mathcal{N}_{\mu_N}\phi_{j,N} ]\big)^2\Big] = \frac{1}{N} \mathbb{E}_\mu  \Big[ (Y\mathcal{N}_{\mu_N}\phi_{j,N} -\mathbb{E}_\mu  [Y\mathcal{N}_{\mu_N}\phi_{j,N} ]\big)^2\Big] = \frac{\mbox{Var}[Y\mathcal{N}_{\mu_N}\phi_{j,N}]}{N}.\nonumber
\EEA
Without loss of generality, suppose that $\mathbb{E}_\mu[Y] = \mathbb{E}_\mu[\phi_{j}]=0$. If $Y$ is continuous on $M\subset \Omega$, the compact support of $\mu$, then
\BEA
\mbox{Var}[Y\mathcal{N}_{\mu_N}\phi_{j,N}] &=& \mathbb{E}_\mu[Y^2(\mathcal{N}_{\mu_N}\phi_{j,N})^2] \leq \|Y^2\|_{\infty}\mathbb{E}_\mu[(\mathcal{N}_{\mu_N}\phi_{j,N})^2]\nonumber \\ &\leq&  \|Y^2\|_{\infty}(\mathbb{E}_\mu[\phi_{j}^2] + \mathcal{O}(\delta^2)) = \|Y^2\|_{\infty}(1 + \mathcal{O}(\delta^2)),\nonumber
\EEA
where we have used the H\"older inequality and the orthonormality of $\phi_j$. Together with \eqref{bias}, we have 
\BEA
\mathbb{E}_\mu  \Big[\big(\langle \vec{y},\phi_{j,N}\rangle_{L^2(\mu_N)} -\mathbb{E}_\mu  [Y\phi_{j} ]\big)^2\Big] \leq C\left(\frac{1}{N}+ \frac{\delta^2}{N} + \delta^2\right),\label{MCerror}
\EEA
for some constant $C>0$. 

\comment{\color{blue}In fact, an application of the Cauchy Schwarz inequality along with the orthonormality of $\phi_j$ gives the error estimate
\BEA
\mathbb{E}_\mu \Big[\big(\frac{1}{N} \sum_{i=1}^N {y_i} \phi_{j,N}(\omega_i) - \langle Y,\phi_j \rangle_{H}\big)^2\Big]\leq  \frac{1}{N}\mathbb{E}_\mu[Y^2]\mathbb{E}_\mu[\phi_j^2] - \langle Y,\phi_j \rangle_{H}^2 \leq \frac{\mathbb{E}_\mu[Y^2]}{N}.
\EEA}
We should point out that if the samples $\{\omega_i\}$ form a time series generated by an ergodic and stationary dynamical system, then the convergence can still be achieved via the Birkhoff ergodic theorem, but the convergence rate would depend on the mixing rate of the underlying processes \cite{Davydov:68,Hang:14}. Together with the convergence of the continuous representative, we can conclude that the discrete estimator
\BEA
\mathbb{E}_{L,N} [Y|X] := \sum_{j=0}^L \langle \vec{y},\phi_{j,N} \rangle_{L^2(\mu_N)} \phi_{j,N}, \nonumber
\EEA
has a continuous representative
 \BEA\mathcal{N}_{\mu_N} \mathbb{E}_{L,N} [Y|X] =  \sum_{j=0}^L \langle \vec{y},\phi_{j,N} \rangle_{L^2(\mu_N)} \psi_{j,N}/\lambda_{j,N} \label{Nystrom}\EEA 
 that converges in $\mathcal{H}$-norm to $\mathcal{N}_{\mu} \mathbb{E}_{L} [Y|X]$ as $N\to\infty$. Also, the left pseudo-inverse, $K_{\mu}^* \mathcal{N}_{\mu_N} \mathbb{E}_{L,N} [Y|X] \to K_{\mu}^* \mathcal{N}_{\mu} \mathbb{E}_{L} [Y|X]= \mathbb{E}_{L} [Y|X]$ as $N\to\infty$ in $H_X$. Taking $L\to\infty$ after $N\to\infty$, we establish the consistency of the estimator with the target function, $\mathcal{N}_{\mu_N} \mathbb{E}_{L,N} [Y|X] \to \mathbb{E} [Y|X]\in H_X$. 
 
 Let $ \nu_N = \mu_N \circ X^{-1}$ be the pushforward of the sampling measure on covariate space $\mathcal X$. Computationally, we can estimate the discrete orthonormal basis $\{u_{0,N},u_{1,N}, \ldots,u_{L,N}\}$ with respect to $L^2(\nu_N)$ by solving an eigenvalue problem associated with a Markov operator $G_{\mu_N,\epsilon}$ constructed using a  decreasing kernel $k_{\epsilon}$ defined with bandwidth parameter $\epsilon$ (see also remark $3$ of \cite{GTetal2019}).  Note that the pullback is given as $\phi_{j,N,\epsilon} = u_{j,N,\epsilon} \circ X$. If $\mathcal{X}$ is a $d$-dimensional compact smooth manifold embedded in $\mathbb{R}^n$, then $\mathcal{N}_{\mu_N}u_{j,N,\epsilon}$ converges to the eigenfunctions $u_j$ of the Laplace-Beltrami operator (positive definite with  respect to $V$) as $N\rightarrow \infty$ and $\epsilon\rightarrow 0$. If $\mathcal \nu_N$  has a smooth density with respect to the volume form, then the Laplace-Beltrami is defined with a conformally changed Riemannian metric inherited by $\mathcal{X}$ from the ambient space $\mathbb{R}^n$. In this case, we have:

\begin{prop}
Let $\mathcal{X}$ be a $d$-dimensional compact smooth Riemannian manifold with no boundary. Let $Y := F\circ X$ such that $F:\mathcal{X}\to\mathcal{Y}$ belongs to a Sobolev class, $ H^\beta(\mathcal{X}):=\{F\in V | \hat{F}_{j}:= \langle F,u_j \rangle_{V}, \sum_j  \zeta_j^\beta \hat{F}_{j}^2 <\infty, \beta>0  \}$, where $\zeta_j$ is the eigenvalue of the Laplace-Beltrami operator associated with eigenfunction $u_j$, approximated with $u_{j,N,\epsilon}$ as discussed in the preceding paragraph. Assume also that $Y\in C(M)$, where $M\subset\Omega$ denotes the compact support of the invariant measure $\mu$.
Then, with $\mu_N=\sum_{j=1}^N\delta_{\omega_j}/N$ and ${\mathcal{N}_\mu} \mathbb{E}_{L,N} [Y|\cdot]$ defined as in \eqref{Nystrom}, we have
\BEA
\mathbb{E}_\nu \Big[(\mathcal{N}_{\mu_N} \mathbb{E}_{L,N} [Y|\cdot] -  \mathbb{E} [Y|\cdot])^2\Big] = \mathcal{O}(LN^{-1},\log(N)^{p_d}N^{-\frac{1}{d}},L^{-\frac{2\beta}{d}}), \nonumber
\EEA
where $p_d = 3/4$ for $d=2$ and $p_d=1/d$ for $d\geq 3$.
\end{prop}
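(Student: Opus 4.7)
The plan is to decompose the mean-square error into three contributions — a truncation tail controlled by the Sobolev regularity of $F$, a Monte-Carlo fluctuation from estimating the coefficients from finitely many i.i.d.\ samples, and a spectral bias from replacing the true eigenpairs of $G_\mu$ by their graph-Laplacian analogues — and match each to one of the three terms in the $\mathcal{O}$-estimate. Inserting the intermediate quantity $\mathcal{N}_\mu\mathbb{E}_L[Y|\cdot]$ and using $(a+b)^2\leq 2a^2+2b^2$, I would bound $\mathbb{E}_\nu[(\mathcal{N}_{\mu_N}\mathbb{E}_{L,N}[Y|\cdot]-\mathbb{E}[Y|\cdot])^2]$ by twice the squared $V$-distance between $\mathcal{N}_\mu\mathbb{E}_L[Y|\cdot]$ and $\mathbb{E}[Y|\cdot]$ (the truncation tail), plus twice the squared $V$-distance between the fully empirical estimator and $\mathcal{N}_\mu\mathbb{E}_L[Y|\cdot]$ (the estimation error).

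For the truncation tail I would use Parseval in $V=L^2(\nu)$, noting that $\mathbb{E}[Y|\cdot]= F = \sum_j \hat F_j u_j$ while $\mathbb{E}_L[Y|\cdot]$ is its $L$-term partial sum, so that
$$\mathbb{E}_\nu\bigl[(\mathcal{N}_\mu\mathbb{E}_L[Y|\cdot]-\mathbb{E}[Y|\cdot])^2\bigr] = \sum_{j>L}\hat F_j^2 \leq \zeta_{L+1}^{-\beta}\sum_{j>L}\zeta_j^\beta \hat F_j^2 \leq \zeta_{L+1}^{-\beta}\|F\|_{H^\beta}^2.$$
Weyl's law for the Laplace-Beltrami spectrum on a compact $d$-manifold without boundary gives $\zeta_j\asymp j^{2/d}$, producing the $L^{-2\beta/d}$ term.

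For the estimation error I would further split into a pure Monte-Carlo part (using the same empirical basis $\{u_{j,N,\epsilon},\phi_{j,N}\}$ but comparing the empirical coefficient $\langle \vec{y},\phi_{j,N}\rangle_{L^2(\mu_N)}$ to its expectation) and a pure spectral-bias part (comparing the empirical eigensystem to that of $G_\mu$). The first is controlled termwise by \eqref{MCerror} at rate $\mathcal{O}(N^{-1}+\delta^2)$, and near-orthonormality of $\{u_{j,N,\epsilon}\}$ in $L^2(\nu_N)$ converts the $L$-term sum into the $LN^{-1}$ contribution. The second is exactly the spectral convergence rate of the Gaussian-kernel graph Laplacian toward the Laplace-Beltrami operator on a compact manifold without boundary, which \cite{GTetal2019} establishes at the rate $\log(N)^{p_d}N^{-1/d}$ with precisely the stated dimension-dependent exponent $p_d$.

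The main obstacle is bookkeeping across the four Hilbert spaces $L^2(\mu)$, $L^2(\mu_N)$, $L^2(\nu)$ and $\mathcal{H}$, keeping the Monte-Carlo and spectral-bias contributions disentangled (since $\delta$ already appears inside \eqref{MCerror}) and verifying that the out-of-sample Nyström extension $\mathcal{N}_{\mu_N}$ does not amplify the error when evaluated against the continuous pushforward $\nu$ — roles played by the compactness of $M$, the assumed continuity $Y\in C(M)$, and the uniform Mercer bound on $k$. A related subtlety is showing that the $L\delta^2$ bias from substituting $\phi_{j,N}$ for $\phi_j$ inside the inner products of \eqref{bias} is absorbed by the $\log(N)^{p_d}N^{-1/d}$ term under the implicit regime in which the three reported rates are stated as asymptotically competing, rather than multiplied.
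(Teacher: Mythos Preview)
Your proposal is correct and follows essentially the same route as the paper: the same three-way split into a truncation tail (handled by Parseval, the Sobolev condition, and Weyl's law $\zeta_j\asymp j^{2/d}$), a Monte-Carlo coefficient error controlled termwise by \eqref{MCerror} to yield the $LN^{-1}$ term, and a spectral-bias term supplied by the graph-Laplacian convergence rate of \cite{GTetal2019}. The only cosmetic difference is that the paper inserts $K_\mu^*\mathcal{N}_\mu\mathbb{E}_L[Y|\cdot]$ rather than $\mathcal{N}_\mu\mathbb{E}_L[Y|\cdot]$ as the intermediate quantity, which is the same object by \eqref{identitymap}; your concerns in the final paragraph about the $L\delta^2$ cross-term and bookkeeping across spaces are exactly the places where the paper's argument is also informal and absorbs constants into the big-$\mathcal{O}$.
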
 
\begin{proof} 
To compute the error rate, we split the error into the variance error term that arises due to discrete data and the bias term that arises due to the truncation of eigenfunctions:
\BEA
\mathbb{E}_\nu \Big[(\mathcal{N}_{\mu_N} \mathbb{E}_{L,N} [Y|\cdot] -  \mathbb{E} [Y|\cdot])^2\Big] &\leq& \mathbb{E}_\nu \Big[( \mathcal{N}_{\mu_N} \mathbb{E}_{L,N} [Y|\cdot] - K_{\mu}^* \mathcal{N}_{\mu} \mathbb{E}_L [Y|\cdot])^2\Big] + \mathbb{E}_\nu \Big[(K_{\mu}^* \mathcal{N}_{\mu} \mathbb{E}_L [Y|\cdot] - \mathbb{E} [Y|\cdot])^2\Big]\nonumber \\
&\leq& \mathbb{E}_\nu \Big[ \Big(\sum_{j=0}^L \big(\langle \vec{y},\phi_{j,N,\epsilon} \rangle_{L^2(\mu_N)} - \langle Y,\phi_j \rangle_{H}\big)   \mathcal{N}_{\mu_N} u_{j,N,\epsilon} \Big)^2\Big] \ldots \nonumber\\
&+& \mathbb{E}_\nu \Big[ \Big(\sum_{j=0}^L  \langle Y,\phi_j \rangle_{H} ( \mathcal{N}_{\mu_N}u_{j,N,\epsilon}-K_{\mu}^* \mathcal{N}_{\mu}u_{j}  \Big)^2\Big]  + \mathbb{E}_\nu \Big[\big(\sum_{j >L} \langle Y,\phi_j \rangle_{H} u_j \big)^2\Big] \nonumber\\
&\leq& \frac{L}{N} \mathbb{E}_\mu[Y^2] + C\Big( \frac{\log(N)^{p_d}}{N^{\frac{1}{d}}} \Big)+ \sum_{j >L} \langle Y,\phi_j \rangle^2_{H},\nonumber
\EEA 
for some constant $C$ that is independent of $\epsilon, N, d$ but can depend on $L$. In the second equality above for the variance term, we isolate the errors due to Monte-Carlo approximation of the expansion coefficients (which is computed in \eqref{MCerror}), where we suppressed the order $\delta^2/N$ term since it is dominated by the error of order-$\delta^2$ in the discrete approximation of the eigenfunctions Using the recent result in \cite{GTetal2019} for compact manifolds without boundary, the $L^2$-error bound for each eigenfunction (as $\epsilon \rightarrow 0$) is given by $\delta=\mathcal{O}\Big(\frac{\log(N)^{p_d}}{N^{1/d}} \Big)^{1/2}$, where $d$ denotes the intrinsic dimension of $\mathcal{X}$ and $p_d = 3/4$ for $d=2$ and $p_d=1/d$ for $d\geq 3$. 

For all $Y=F\circ X$, we have that $\langle Y,\phi_j \rangle_{H} = \langle F,u_j \rangle_{V} = \hat{F}_j$, and since $F\in H^\beta(\mathcal{X})$, we have
\BEA
 \sum_{j >L} \langle Y,\phi_j \rangle^2_{H} = \sum_{j> L} \hat{F}_j \leq \frac{1}{\zeta_{L+1}^\beta} \sum_{j=0}^\infty \zeta_j^\beta \hat{F}_j^2 \leq C_2 \zeta_{L+1}^{-\beta}, \nonumber
\EEA 
for some constant $C_2>0$. The proof follows by using the Weyl asymptotic estimate for the eigenvalue of the Laplace-Beltrami operator on compact Riemannian manifolds \cite{colbois2013eigenvalues}, $\zeta_{L+1} \sim L^{2/d}$.
\end{proof}

We should point out that balancing the first and last error rates yields the famous minimax optimal rate, $\mathcal{O}( N^{-\frac{2\beta}{2\beta+d}})$ for linear estimators \cite{stone1982optimal}. Thus, unless the response function is highly smooth (e.g, $\beta=d$), such an estimator is subject to the curse of dimension. In practice, the second error rate (corresponding to the estimation of eigenvectors) will dominate in high-dimensional problems even if the target function is smooth.

\subsection{Kernel smoothing estimator}\label{sec::kernelsmooth}

In the previous subsection, we approximated $\mathbb{E}[Y| X]$ with, $\mathbb{E}_{L,N}[Y| X]$, a superposition of eigenvectors of  $\mathbf{G}_N$ and then used Nystr\"om extension \eqref{Nystrom} to evaluate this representation on an out-of-sample point. In this subsection we show that the conditional expectation can also be approximated by an appropriate smoothing function in $H$.

The main idea is motivated by the fact that if $\mathcal X$ is a smooth manifold, any measurable function $g\in V$ can be represented as
\[g(x)=\mathbb{E}_{\delta_x}[g],\]
where $\delta_x$ denotes the Dirac mass centered at $x$. We can then attempt to regularize this integral operation by approximating $\delta_x$ with an appropriate family of Markov kernels that have a smooth density with respect to the pushforward measure $\nu$. 

To that end,  we assume that $\mathcal X = \mathbb R^m$ and the support of $ \nu $ is a smooth, compact $d$-dimensional submanifold $ \mathcal M \subseteq \mathcal X$. We then start with a kernel $S_{\epsilon}:\mathcal{X}\times \mathcal{X} \to\mathbb{R}$, where $\epsilon>0$ is a bandwidth parameter, and perform a sequence of normalizations that yield, asymptotically, the kernel $\kappa_\epsilon$ so that 
\BEA
G_{\epsilon}g(x):=\int_\mathcal{X} \kappa_{\epsilon}(x,x')g(x')d\nu(x')=g(x)+\mathcal{O}(\epsilon), \label{VBDMexpansion}
\EEA
 holds for $g \in V$ and $x\in\mathcal{M}$. The integral operator $G_{\epsilon}$ can then be approximated by a matrix-vector multiplication. In this paper, we use the variable bandwidth construction of the kernel given in \cite{bh:15vb}. This expansion starts with a kernel $S_{\epsilon}$ on $\mathcal{X}\times \mathcal{X}$ of the form 
 \BEA
 S_{\epsilon}(x,x')=\epsilon^{-d/2}\exp\left(-\frac{\|x-x'\|^2}{\epsilon\rho(x)\rho(x')}\right), \nonumber 
 \EEA
 where $\rho>0$ is a bandwidth function  that is chosen to be inversely proportional to a power of the sampling density as in \cite{bh:15vb}. 
 
 For completeness, we describe the construction of the discrete approximation of the operator in \eqref{VBDMexpansion}. Let $x_1,\ldots x_N$ be the observed $m$-dimensional data in $\mathcal{X}$. Then the following steps (which are the diffusion maps normalizations \cite{bh:15vb}) yield a discrete approximation $G_{N,\epsilon}$ of the integral operator $G_\epsilon$, whose discrete representation is denoted by the matrix $\mathbf{G}_{N,\epsilon}=[\langle e_{i,N},G_{N,\epsilon}e_{j,N}\rangle_{L^2(\mu_N)}] = [\kappa_\epsilon(x_i,x_j)]$,
  \BEA
 \begin{aligned}\label{dmnormalization}
 q_{\epsilon}(x_i)&:=\sum_{j=1}^N\frac{S_{\epsilon}(x_i,x_j)}{\rho(x_i)^d}, & S_{\epsilon,\alpha}(x_i,x_j)&:=\frac{S_{\epsilon}(x_i,x_j)}{q_{\epsilon}(x_i)^\alpha q_{\epsilon}(x_j)^{\alpha}},\\
q_{\epsilon,\alpha}(x_i,x_j)&:=\sum_{j=1}^N S_{\epsilon,\alpha}(x_i,x_j) & \mathbf{G}_{N,\epsilon}(x_i,x_j)&:=\frac{S_{\epsilon,\alpha}(x_i,x_j)}{q_{\epsilon,\alpha}(x_i)}.
  \end{aligned}
  \EEA
 The two steps in the first row above are the ``right-normalization'' steps taken to de-bias the possibly non-uniform sampling distribution of the data with a parameter $\alpha$.  In our numerics, we set $\alpha=-d/4$ and $\rho =q_{\epsilon}^{-1/2}$ as in \cite{bh:15vb}. The two ``left-normalization'' steps in the second row of \eqref{dmnormalization} turn $\mathbf{G}_{N,\epsilon}$ into a stochastic matrix. Note that the resulting kernel $\kappa_\epsilon$ from \eqref{dmnormalization} is given in Appendix~A5 of \cite{bh:15vb}. Based on the result in \cite{bh:15vb}, for $\{x_1, \ldots, x_N\} \subset \mathcal M $ with sampling density $q = d\nu/d\text{vol}$, where $\text{vol} $ is the volume form on $\mathcal M$ through its embedding in $\mathcal X$, for fixed $\epsilon$, we have the convergence rate
 \BEA
\left(\mathbf{G}_{N,\epsilon} \vec{g}\right)_i = G_\epsilon g(x_i) + \mathcal{O}\left(\frac{q(x_i)^{1/2+d/4}}{N^{1/2}\epsilon^{2+d/4}},\frac{q(x_i)^{d(d/2-1/4)}}{N^{1/2}\epsilon^{1/2+d/4}}\right),\label{VBDMdiscreteestimate}
 \EEA
as $N\to\infty$.
The second term in the error bound is due to the error in the discrete estimate and the first term is to ensure an order-$\epsilon^2$estimate of $q_\epsilon$.

Our choice of normalization is to ensure an asymptotically unbiased (up to order $\epsilon$) estimate of $g$ in \eqref{VBDMexpansion}. For many applications, it suffices to start with the standard Gaussian kernel with constant bandwidth ($\rho=1$) and apply the steps in the second row in \eqref{dmnormalization} to create a valid transition density.However, in this paper, we will always construct the integral operator in \eqref{VBDMexpansion} using the variable bandwidth kernels due to their accurate estimation of densities in sparsely sampled regions.  To tune the kernel bandwidth parameter $\epsilon$, we use the auto-tuning algorithm in \cite{bh:15vb} which was found to be more effective for variable bandwidth kernels than the Gaussian kernel with a fixed bandwidth ($\rho=1$). Furthermore, we have

\begin{prop}
    Let $P$ be the orthogonal projection of $H$ onto $H_X$. Then for any $g\in V$ and $x_i = X(\omega_i) \in \mathcal{M}$, the relationship
\BEA
\left(\mathbf{G}_{N,\epsilon} \vec{g}\right)_i  = Pg(x_i) + \mathcal{O} \left(\epsilon,\frac{q(x_i)^{1/2+d/4}}{N^{1/2}\epsilon^{2+d/4}},\frac{q(x_i)^{d(d/2-1/4)}}{N^{1/2}\epsilon^{1/2+d/4}}\right),
\EEA
holds in high probability.
\end{prop}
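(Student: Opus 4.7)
The plan is to reduce the proposition to two already-established ingredients: the continuum bias expansion \eqref{VBDMexpansion} and the discrete Monte-Carlo estimate \eqref{VBDMdiscreteestimate} from \cite{bh:15vb}. The only genuinely new observation is the identification of $Pg(x_i)$ with $g(x_i)$ for $g \in V$, after which the two ingredients combine by a triangle inequality.

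First I would unpack the left-hand side $Pg(x_i)$. By definition, $P : H \to H_X$ is the conditional expectation $\mathbb{E}[\cdot \mid X]$, and the pullback $g \mapsto g \circ X$ is a unitary isomorphism $V \to H_X$. Under this identification, any $g \in V$ corresponds to $g \circ X \in H_X$, which is already $\sigma(X)$-measurable, so $P(g \circ X) = g \circ X$. Evaluating at $\omega_i$ with $x_i = X(\omega_i) \in \mathcal{M}$ gives $Pg(x_i) = g(x_i)$. This reduces the claim to showing that $(\mathbf{G}_{N,\epsilon}\vec{g})_i = g(x_i)$ up to the three stated error terms.

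Next I would chain the two cited estimates. Applying \eqref{VBDMexpansion} at the sample point $x_i \in \mathcal{M}$ gives $G_\epsilon g(x_i) = g(x_i) + \mathcal{O}(\epsilon)$, which produces the first error term $\mathcal{O}(\epsilon)$. Then applying the discrete convergence bound \eqref{VBDMdiscreteestimate} from \cite{bh:15vb} gives
\begin{equation*}
(\mathbf{G}_{N,\epsilon}\vec{g})_i = G_\epsilon g(x_i) + \mathcal{O}\!\left(\frac{q(x_i)^{1/2+d/4}}{N^{1/2}\epsilon^{2+d/4}},\,\frac{q(x_i)^{d(d/2-1/4)}}{N^{1/2}\epsilon^{1/2+d/4}}\right),
\end{equation*}
which produces the remaining two error terms. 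A triangle inequality then yields the asserted bound.

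The main thing to justify is the ``in high probability'' qualifier, since the $\mathcal{O}(\epsilon)$ term is deterministic but the other two originate from concentration of the empirical sums that build $\mathbf{G}_{N,\epsilon}$ through the five normalization steps of \eqref{dmnormalization}. The obstacle is to verify that the concentration propagates through all of these normalizations: each one is a ratio of empirical sums whose numerator and denominator concentrate about their $\nu$-expectations by Bernstein/Hoeffding bounds, and for fixed $\epsilon$ and $x_i$ in the bulk of $\mathrm{supp}(\nu)$ the denominators are bounded below so the ratios concentrate as well. Rather than re-derive this, I would invoke \eqref{VBDMdiscreteestimate} as a high-probability rate (which is how it is proved in \cite{bh:15vb}) and note that the $O(\epsilon)$ bias from \eqref{VBDMexpansion} is uniform in the randomness, so the combined bound holds with the same probability.
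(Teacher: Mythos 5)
Your proposal is correct and follows essentially the same route as the paper: identify $Pg$ with $g$ (the paper does this via the change of variables $G_\epsilon g(x_i) = J_\epsilon f(\omega_i)$ with $f = g\circ X \in H_X$, hence $J_\epsilon f = J_\epsilon Pf$), apply the bias expansion \eqref{VBDMexpansion} for the $\mathcal{O}(\epsilon)$ term, apply the discrete estimate \eqref{VBDMdiscreteestimate} for the sampling-error terms, and combine by the triangle inequality. Your closing remarks on how the high-probability qualifier propagates through the normalizations in \eqref{dmnormalization} are a welcome addition that the paper leaves implicit.
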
  
\begin{proof}
Note that for any $g\in V$, where $f=g\circ X \in H_X$, $\nu = X_*\mu$, and $x_i=X(\omega_i)\in \mathcal{M}$, a change of variables shows that 
 \BEA
G_\epsilon g(x_i) = \int_\mathcal{X} \kappa_{\epsilon}(x_i,x')g(x')d\nu(x') = \int_\Omega \kappa_{\epsilon}(X(\omega_i),X(\omega'))(g\circ X)(\omega')d\mu(\omega') := J_\epsilon f(\omega_i).\label{changeofvariable}
\EEA
 Let $k_{\epsilon,\omega_i}:=k_\epsilon(\omega_i,\cdot)=\kappa_{\epsilon}(X(\omega_i),X(\cdot))$. Since $f \in H_X$, we see that $J_{\epsilon}f(\omega_i)=J_\epsilon P f(\omega_i)$. Thus, for each $\omega_i \in \Omega$,
 \begin{align*}
 J_\epsilon Pf (\omega_i)=(G_\epsilon Pg)\circ X (\omega_i)= Pg\circ X (\omega_i)+ \mathcal{O}(\epsilon) = Pf (\omega_i)+ \mathcal{O}(\epsilon),
\end{align*} 
as $\epsilon \rightarrow 0$, due to the asymptotic expansion in \eqref{VBDMexpansion}. Together with \eqref{VBDMdiscreteestimate} and \eqref{changeofvariable}, we have
\BEA
\left(\mathbf{G}_{N,\epsilon} \vec{g}\right)_i - Pg(x_i)  &=&   \left(\left(\mathbf{G}_{N,\epsilon} \vec{g}\right)_i -  G_\epsilon g(x_i) \right) + \left(G_\epsilon g(x_i)   -Pg(x_i)\right) \nonumber \\ &=&
\left(\left(\mathbf{G}_{N,\epsilon} \vec{g}\right)_i -  G_\epsilon g(x_i) \right) + \left(J_\epsilon f(\omega_i)   -Pf(\omega_i)\right) \nonumber \\ &=&  \mathcal{O} \left(\epsilon,\frac{q(x_i)^{1/2+d/4}}{N^{1/2}\epsilon^{2+d/4}},\frac{q(x_i)^{d(d/2-1/4)}}{N^{1/2}\epsilon^{1/2+d/4}}\right).\nonumber
\EEA
\end{proof}
 
 For a function $\vec{g}$ whose components are function values at the training data $\{x_1,\ldots, x_N\}$,  the kernel smoothing estimate of $g(x_{out})$ on a new point $x_{out}$ is given by $\mathbf{G}_{N_{out},\epsilon} \vec{g}$, where $\mathbf{G}_{N_{out},\epsilon}$ is a row vector consisting of $\kappa_\epsilon(x_{out},x_i)$ for $i=1,\ldots, N$. 
  The definition of $G_{\epsilon}$ can be extended to $g \in \mathbb{R}^n$, where $n>1$ componentwise. That is, if $g(x)=(g_1(x),\ldots,g_n(x))$, where $g_i (x) \in \mathbb{R}$ for $1 \leq i \leq n$ then $G_{\epsilon}g(x):= (G_{\epsilon}g_1(x),\ldots, G_{\epsilon}g_n(x))$. Then, the componentwise convergence in probability holds due to the preceding proposition. The discrete estimator then becomes a matrix-matrix multiplication, $\mathbf{G}_{N,\epsilon}\mathbf{g}$, where the $ij$th component of the matrix $\mathbf{g}$ is given by $g_{i}(x_j)$.  

The kernel smoothing estimate is conceptually simple and computationally fast to construct.  While naive, we will show in the next section that the kernel smoothing estimate of the conditional expectation performs well when accurate estimation of the eigenbasis of $V$ is not available, especially when the covariate space is high-dimensional.

\section{Predicting the dynamics of observables}
\label{sec:prediction}
In this section, we discuss the problem of predicting observables (e.g.,partial components) of a measure preserving discrete time dynamical system. We start by reviewing the MZ formalism \cite{Zwanzig2001}, which is a classical reduced-order modeling framework often used for this task. The MZ formalism expresses the evolution of the desired reduced order dynamics in terms involving Markovian, non-Markovian, and orthogonal dynamics through the use of an orthogonal projection operator. The total contributions of the Markovian and non-Markovian terms in this decomposition will coincide with the optimal solution to the regression problem with observables at initial and future times as covariate and response data, respectively. For a low-dimensional covariate space, we show that the regression estimator can be accurately constructed using the Nystr\"{o}m method in Section~\ref{sec32}. 
In Section~\ref{sec:estMZ}, we argue that if the hypotheses of delay-embedding theorems are satisfied \cite{TAKENS2010345,MR1137425}, the MZ-equations can be simplified to only a ``Markovian" term, whose representation is precisely the regression function that maps the delay-embedded observable to the observable at a future time. This regression function, which can be estimated by KAF, is nothing but the component of the flow map induced by the lag embedding. In such high-dimensional covariate space regression problems, we numerically demonstrate that the kernel smoothing estimate is a more accurate estimator than the Nystr\"{o}m method.

\subsection{Mori-Zwanzig formalism for reduced order modeling}\label{sec31}

Let $(\Omega, \Phi)$ be a discrete-time deterministic dynamical system, generated by an invertible map $\Phi: \Omega\to\Omega$. Furthermore, we assume that there is a $\Phi$-invariant probability measure $\mu:\mathcal{B}(\Omega)\rightarrow [0,1]$, where $\mathcal{B}(\Omega)$ is the Borel $\sigma$-algebra on $\Omega$. That is, for all $B\in\mathcal{B}(\Omega)$, $\mu(\Phi^{-1}(B))=\mu(B)$. As in Section~\ref{sec:regression}, we assume that $\mu$ is supported on a compact set $M \subseteq \Omega$. For a given $\omega_0\sim \mu$, we let $\omega_i:= \Phi^i(\omega_0)$.

In what follows, we assume that only partial observations $x_i\in \mathbb{R}^n$ of $\omega_i$ are available and they are defined through a measurable  function $X:\Omega \rightarrow \mathcal{X} = \mathbb{R}^n$ such that $x_i:= X(\omega_i)=X\circ\Phi^i(\omega_0)$. Since our goal is to use the observed time series of $\{x_i\}$ to estimate $x_{i+t}\in\mathcal{X}$ for some $t \in \mathbb{N}$, we set the response space equals to the covariate space, $\mathcal{Y}=\mathcal{X}$, and consider the response map $Y = X_t:\Omega \rightarrow \mathcal{X}$, where $X_t = X \circ \Phi^t $; that is, $ X_t(\omega_0) =  X(\omega_{t})$. Note that since $\mathcal{X}=\mathcal{Y} = \mathbb R^n$, we have $H=\{f:\Omega\to \mathcal{X}: \int_\Omega \lVert f^2(\omega) \rVert^2d\mu(\omega)< \infty\}$, $V=\{g:\mathcal{X}\to\mathcal{X}: g\circ X\in H\}$, and  $H_X=\{f\in H: f=g\circ X {\mbox{ for some } g\in V} \}$.

Let us define an orthogonal projection operator $P:H\to S_X \subseteq  H_X\subseteq H$, where $S_X = \mbox{ran} (P)$ is a closed subspace of $H_X$, and let $Q=I-P$ be the orthogonal projection onto the orthogonal space, $S_X^\perp=\mbox{null}(P)$. With these projection operators, we have $H= S_X \oplus S_X^\perp$. Let $U:H\to H$ be the Koopman operator defined as $Uf = f\circ \Phi$, for all $f\in H$. Then by ``the Dyson's formula" \cite{darve2009computing,lin2019data}, the map $U^{i}:H\to H$ given by $U^{i}f=f\circ \Phi^i $ can be written as,

 \begin{equation}
U^{i+1}=\sum_{k=0}^iU^{i-k}PU(QU)^k+(QU)^{i+1}. \label{Dyson}
\end{equation}

Applying \eqref{Dyson} on $X\in H$, we obtain the discrete MZ equation that describes the evolution of $x_i=X(\omega_i)$. In detail, letting $\Xi_{i}:=(QU)^{i}X$ and noting that
\begin{align*}
U^{i+1}X&=X_{i+1}\\
U^{i-k}PU(QU)^kX&=PU(QU)^kX\circ\Phi^{i-k} =P(\Xi_k\circ \Phi)(X\circ\Phi^{i-k})\\
(QU)^{i+1}X&=\Xi_{i+1},
\end{align*}
along with the fact that $PQ=0$, yields
\begin{eqnarray}
X_{i+1}(\omega_0)= (PU X \circ \Phi^i)(\omega_0)+\sum_{k=1}^iP( \Xi_k\circ \Phi)\circ(X\circ \Phi^{i-k})(\omega_0)+\Xi_{i+1}(\omega_0). \label{MZ}
\end{eqnarray}

Since $PUX \in S_X$, there exists an $M_0 \in V$ such that $PUX = M_0\circ X$, by the definition of $S_X$. Similarly, since $P (\Xi_k\circ \Phi)\circ X \in S_X$, there exists $M_k\in V$ such that $P (\Xi_k\circ \Phi)\circ X = M_k \circ X$. Therefore, Eq.~\eqref{MZ} can be written in terms of the observable values $x_i$ as
\BEA
x_{i+1}=M_0(x_i)+\sum_{k=1}^iM_k(x_{i-k})+\Xi_{i+1}(\omega_0). \label{MZ2}
\EEA

Note that \eqref{MZ2} decomposes $x_{i+1}$ into the Markovian term $M_0$,  the memory terms $M_i$ and a term $\Xi_i$ that is orthogonal to $S_X$.

\subsection{Approximation of the projected Mori-Zwanzig equation}\label{sec32}

If we consider the specific choice $S_X = H_X$, and thus the projection operator $P:=\mathbb{E}[\cdot| X]$, we obtain the projected MZ-equation,
\begin{eqnarray}
\mathbb{E}[X_{i+1}|x_0] = M_0(x_i)+\sum_{k=1}^iM_k(x_{i-k}), \label{PE}
\end{eqnarray} 
since $M_k\circ X\in H_X$, and the orthogonal term $P\Xi_{i+1}=0$ since $\Xi_{i+1}\in H_X^\perp$.  In this case, notice that $\mathbb{E}[X_{i+1}\mid \cdot]$ is precisely the minimizer in \eqref{optimal} with $X_{i+1}$ in place of $Y$. The main takeaway here is that the solutions of the projected MZ-equation in \eqref{PE} is the regression function, $\mathbb{E}[X_{i+1}|\cdot]$, of the dynamical map $X_0 \mapsto X_{i+1}$. The importance of this observation is that one can approximate $\mathbb{E}[X_{i+1}|X_0]$ from the historical data $\{x_i\}$. In the next two examples, we will numerically verify this claim with the two nonparametric estimators discussed in Section~\ref{sec:regression}, the Nystr\"om method and kernel smoothing.
  
{\bf Hamiltonian system:} First, consider the $16$-dimensional dynamical system given by the Hamiltonian 
\BEA
H(\omega)=\frac{1}{2}\bigg(\sum_{i=1}^{16}\omega_{(i)}^2+\sum_{i=1}^{7}\omega_{(2i-1)}^2\omega_{(2i+1)}^2\bigg), \label{HH}
\EEA
where  $(\omega_{(2i-1)},\omega_{(2i)})$ for $i=1,\ldots,8$ are the canonical conjugate variables and $\omega=(\omega_{(1)},\ldots,\omega_{(16)})\in\mathbb{R}^{16}$. Thus the full system is derived through the relations  
\begin{eqnarray}
\frac{d{\omega}_{(2i-1)}}{dt}=\frac{\partial H(\omega)}{\partial \omega_{(2i)}},\quad\quad \frac{d{\omega}_{(2i)}}{dt}=\frac{\partial H(\omega)}{\partial \omega_{(2i-1)}}. \label{Hamiltonian}
\end{eqnarray} 
In addition to the subscript-$(i)$ used to denote the $i$th component of $\omega\in\mathbb{R}^{16}$, we will use the notation $\omega_j$ to denote the $j$th sample of $\Omega$ with sampling measure $\frac{d\mu}{d\omega} \propto e^{-H(\omega)}$, where $\omega_j = (\omega_{j,(1)},\ldots,\omega_{j,(16)})$. We should point out that this example is a high-dimensional version of the main example in~\cite{chk:02}.

Suppose we are interested in the conditional expectation $\mathbb{E}[\omega_{(1)}(t),\omega_{(2)}(t)\mid \omega_{(1)}(0),\omega_{(2)}(0)]$, where the expectation is drawn from the canonical invariant density $\mu$ corresponding to $H$ with  fixed $\omega_{(1)}(0)$ and $\omega_{(2)}(0)$. We can view the problem of estimating the conditional density in the regression framework as follows. Let us define the covariate map $X:\Omega\rightarrow \mathcal{X}$ by $X(\omega(0))=(\omega_{(1)}(0),\omega_{(2)}(0)):=x_0 \in \mathcal{X}$ and the response variable $X_t(\omega)=X(\omega(i\Delta t))=(\omega_{(1)}(i\Delta t),\omega_{(2)}(i\Delta t))$, where $\Delta t>0$ is a fixed time step.

In this example, we will consider estimates based on the Nystr\"om method with $L=100$ and the kernel smoothing estimator. For this application, let $\vec{x}_0 = (x_{0,1},\ldots,x_{0,N})$ and $\vec{x}_0^{out}= (x_{0,1}^{out},\ldots,x_{0,N_{out}}^{out})$ be two vectors that will be used for training and verification, respectively. Each component of these vectors is an i.i.d sample of $X_0$, that is, $x_{0,j}=X_0(\omega_j)$, where $\omega_j$ is drawn independently from $\mu$. Let $\vec{x}_t:=(x_{t,1},\ldots,x_{t,N})$ be a vector of the training time series with components given by $x_{t,j}=U^t\circ X_0(\omega_{j})$.

In the following numerical experiments, the time series $\{x_{t,1},\ldots,x_{t,N}\}_{t=0,1,\ldots}$ was observed at the sampling interval $\Delta t=.1$ time units, and the initial conditions $\{x_{0,1},\ldots,x_{0,N}\}$ are samples of the invariant density $\nu=X_*\mu$. We verify the quality of the estimators on $N_{out} = 1000$ out-of-sample initial conditions, also sampled from $\nu$. To verify the performance of the two estimators, we compare them to the empirical conditional expectation obtained from a Monte-Carlo simulation. This calculation requires samples of the conditional distribution of $(\omega_{(3)}(0), \ldots,\omega_{(16)}(0))$ given each out-of-sample initial condition, $(\omega_{(1)}(0),\omega_{(2)}(0))=x_{0,j}$, where $j=1,\ldots, N_{out}$. Numerically, we obtain these samples, denoted by $(\omega_{(3),j}^{(k)}(0), \ldots,\omega_{(16),j}^{(k)}(0))$, using the Hamiltonian Monte Carlo method \cite{betancourt2017conceptual} on the reduced Hamiltonian in \eqref{HH} with fixed $(\omega_{(1)}(0), \omega_{(2)}(0))=x_{0,j}$. Concatenating these samples and the fixed $x_{0,j}$, we define $\omega_j^{(k)}=(x_{0,j},\omega_{(3),j}^{(k)}(0), \ldots,\omega_{(16),j}^{(k)}(0))$, for $j =1,\ldots, N_{out}, k=1,\ldots,N_{MC}$. In the numerical result below, we use $N_{MC}=20,000$ samples for each initial condition $x_{0,j}$. Given these samples, the Monte-Carlo approximation of $\mathbb{E}[X_t\mid \cdot ]$ is given by 
\BEA 
\mathbb{E}[X_t\mid x_{0,j}]\approx \frac{1}{N_{MC}}\sum_{k=1}^{N_{MC}}U^t\circ X(\omega_j^{(k)})=\frac{1}{N}\sum_{k=1}^{N_{MC}}X\circ \Phi^t(\omega_j^{(k)}), \label{MCMC}
\EEA
where each realization $\Phi^t(\omega_j^{(k)})$ is the solution of the full dynamics with solution map denoted by $\Phi$. Note that the solution map of \eqref{HH} is the result of a temporal discretization of the Hamiltonian dynamics in \eqref{Hamiltonian}; in our numerics, we use the Runge-Kutta-4 (RK4) method. 

In Figure~\ref{Hamiltonian16D}, we show a comparison of the Nystr\"om method and the kernel smoothing estimate of the conditional expectation, constructed using $N=20,000$ training samples and the empirical Monte-Carlo estimate in \eqref{MCMC} for a particular out-of-sample data (which is considered as the truth). Notice that the Nystr\"om method is significantly more accurate than the kernel smoothing estimate. In Figure~\ref{Hamiltonian16D}(b), we also show the Root-Mean-Square-Errors (RMSEs) between the two estimators and the empirical Monte-Carlo estimate of the conditional expectation, averaged over $N_{out}=1000$ out-of-sample initial conditions for training data sizes, $N=10,000$ and $20,000$. Besides the clear advantage of the Nystr\"om method over kernel smoothing, notice that both estimators are improved as the size of training data, $N$, increases.  

Although the Nystr\"om method performs better than the kernel smoothing estimate, the former is computationally more expensive expensive. For both methods we construct $\mathbf{G}_{N,\epsilon}$ using the steps in \eqref{dmnormalization}. To alleviate memory and computation costs associated with full $N\times N$ kernel matrices for $N\gg1$, in practice a $k$-nearest neighbor algorithm is employed so the resulting matrix $\mathbf{G}_{N,\epsilon}$ is sparse with approximately $k$ nonzero entries on each row. To predict on a new data point using the kernel smoothing method, one only needs to extend $\mathbf{G}_{N,\epsilon}$ to the new point and multiply the extended row vector by the column of the response training data. On the other hand, the Nystr\"om method requires the eigenvectors of the matrix $\mathbf{G}_{N,\epsilon}$ and then employs the Nystr\"{o}m extension method to approximate the eigenfunctions evaluated on the new out of sample data points. Thus, after constructing $\mathbf{G}_{N,\epsilon}$, the computational cost of the kernel smoothing method is $\mathcal{O}(k)$ where $k$ is the number of nearest neighbors employed, while the cost for the Nystr\"{o}m method is $\mathcal{O}(kL)+\mathcal{O}(\mathrm{E.D.})$, where $L$ is the number of eigenfunctions used and $\mathcal{O}(\mathrm{E.D.})$ is the cost of acquiring the $L$ eigenvectors.

\begin{figure}
{\ \centering
\begin{tabular}{cc}
(a) & (b)  \\
\includegraphics[width=.49\linewidth]{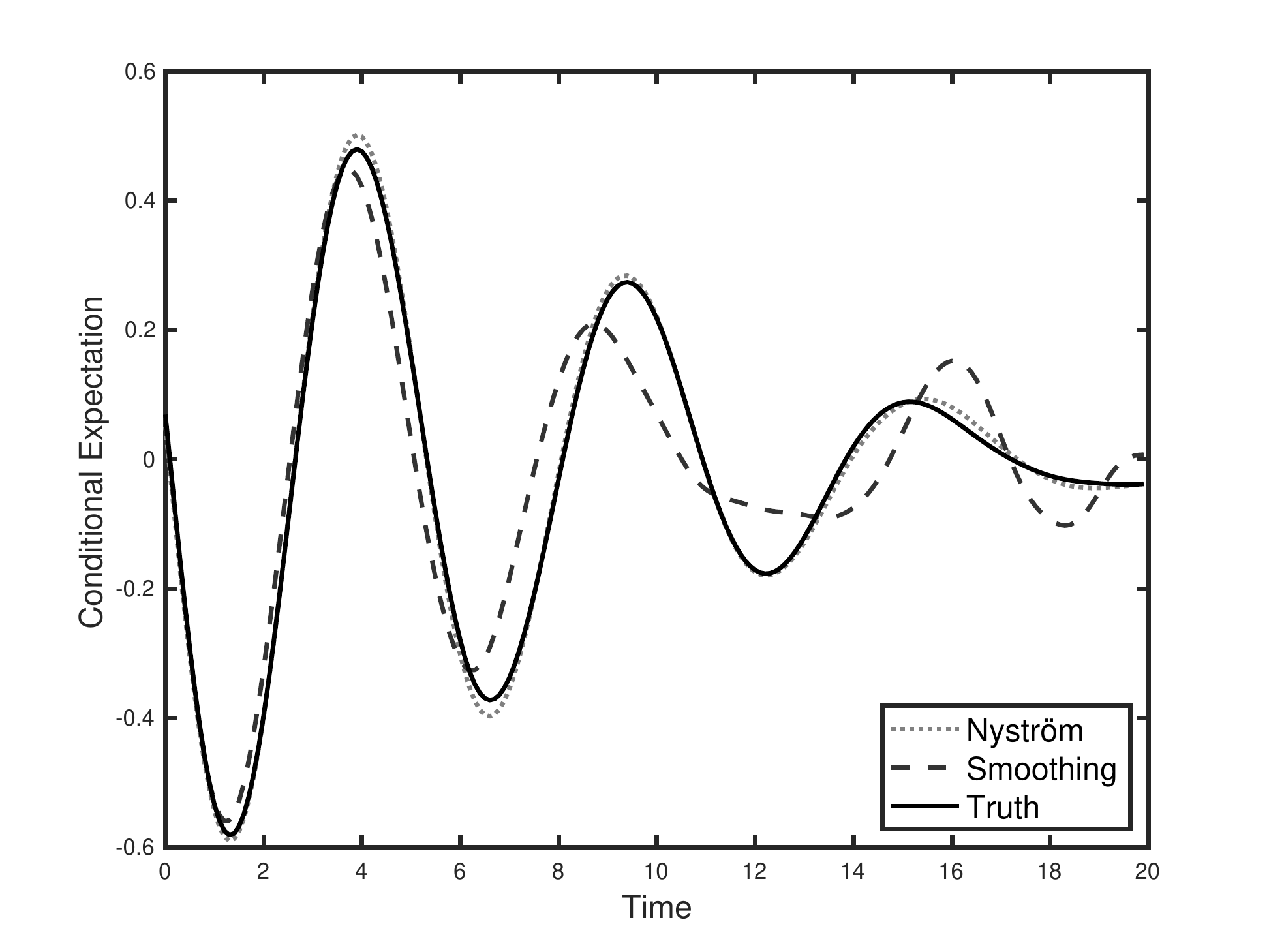}
&
\includegraphics[width=.49\linewidth]{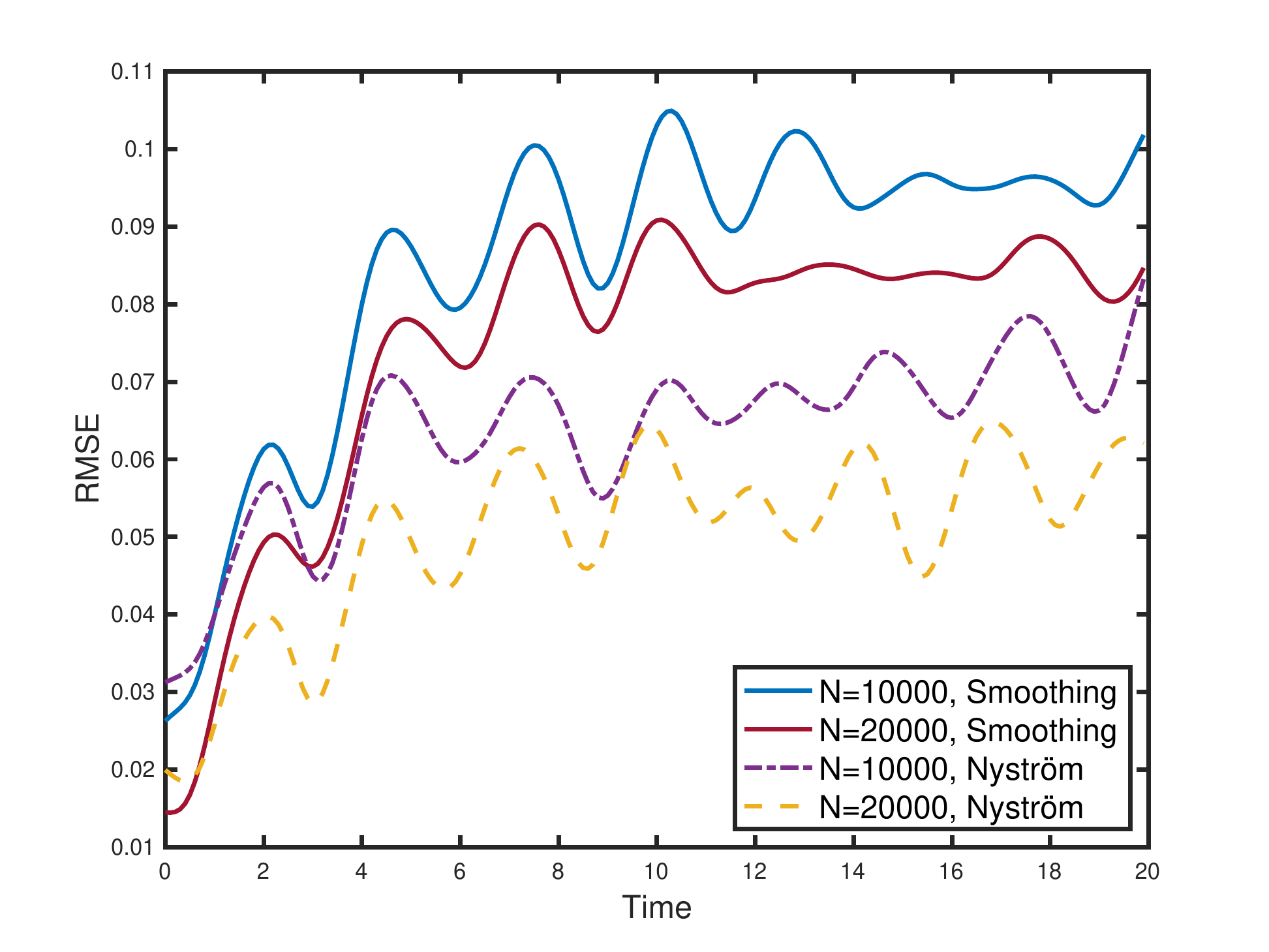}
\end{tabular}%
}
\caption{Hamiltonian Example: (a) Comparison of the kernel smoothing estimate (Smoothing), the Nystr\"om method estimate, and the MC empirical estimate (which is considered as the truth) of the conditional expectation of the first component of a particular out-of-sample trajectory, trained using $N=20,000$ samples. (b) The RMSEs (based on $N_{out}=1000$ samples) of both estimators as functions of lead time forecast, constructed using $N=10,000$ and $N=20,000$ data points.}
\label{Hamiltonian16D}
\end{figure}

{\bf The Lorenz-96 model:} Next, we consider the Lorenz-96 model \cite{Lorenz:96} given by 
\begin{eqnarray}
\frac{d\omega_{(i)}}{dt}=(\omega_{(i+1)}-\omega_{(i-2)})\omega_{(i-1)}-\omega_{(i)}+F \label{L96model}
\end{eqnarray}

for $i=1,\ldots,5$, forcing parameter $F=8$ and, with periodic boundary condition, $\omega_{(-1)}=\omega_{(4)}, \omega_{(0)}=\omega_{(5)},$ and $\omega_{(6)}=\omega_{(1)}$. In this regime, the dynamics is chaotic with attractor dimension 2.9 and two positive Lyapunov exponents as reported in \cite{gottwald2013mechanism}. We estimate the conditional expectation $\mathbb{E}[X_t\mid X_0]$, where the covariate function is $X(\omega)=\omega_{(1)}(0)$, the response function is $X_t(\omega)=X(\Phi^t(\omega))=\omega_{(1)}(t)$, and the initial conditions are drawn from the standard Gaussian distribution. Note that this distribution is not invariant under the dynamics of the system. Here $\Phi$ is given by the RK4 discretization of \eqref{L96model} with time step $1/64$. 

Numerically, we generate $N=20,000$ and $N_{out}=1000$ initial conditions for training and verification, respectively, from the standard five-dimensional multivariate Gaussian and integrate the training data forward $2.5$ time units to generate training time series observations. Subsequently, we used only the first component, $\omega_{(1)}$, of the initial conditions and the training time series to construct, both, the Nystr\"om and kernel smoothing estimates of the conditional expectation. For the Nystr\"om method, we use $L=300$ eigenfunctions. Both estimators are compared to an empirical estimator which is obtained by averaging \eqref{MCMC} over $\omega_j^{(k)}=(x_{0,j},\omega_{(2),j}^{(k)}(0), \ldots,\omega_{(5),j}^{(k)}(0))$, for $j =1,\ldots, N_{out}, k=1,\ldots,N_{MC}=20,000$ samples of initial conditions. Here, the first component of each initial condition, $x_{0,j}=X(\omega_j^{(k)})$, is one of the $N_{out}=1000$ verification samples and the other components are drawn from the four-dimensional standard Gaussian. In Figure~\ref{L96MVN}(a), we show the evolution of one of the $1000$ verification samples. Comparing the three estimates, notice the closer agreement between the Nystr\"om and the empirical estimates. In Figure~\ref{L96MVN}(b) one can see that the RMSE (based on averaging over $N_{out}=1000$ out-of-sample points) of the Nystr\"om-based estimate is more accurate than the kernel smoothing estimate. This result is consistent with the previous example.

\begin{figure}
{\ \centering
\begin{tabular}{cc}
(a) & (b)  \\
\includegraphics[width=.49\linewidth]{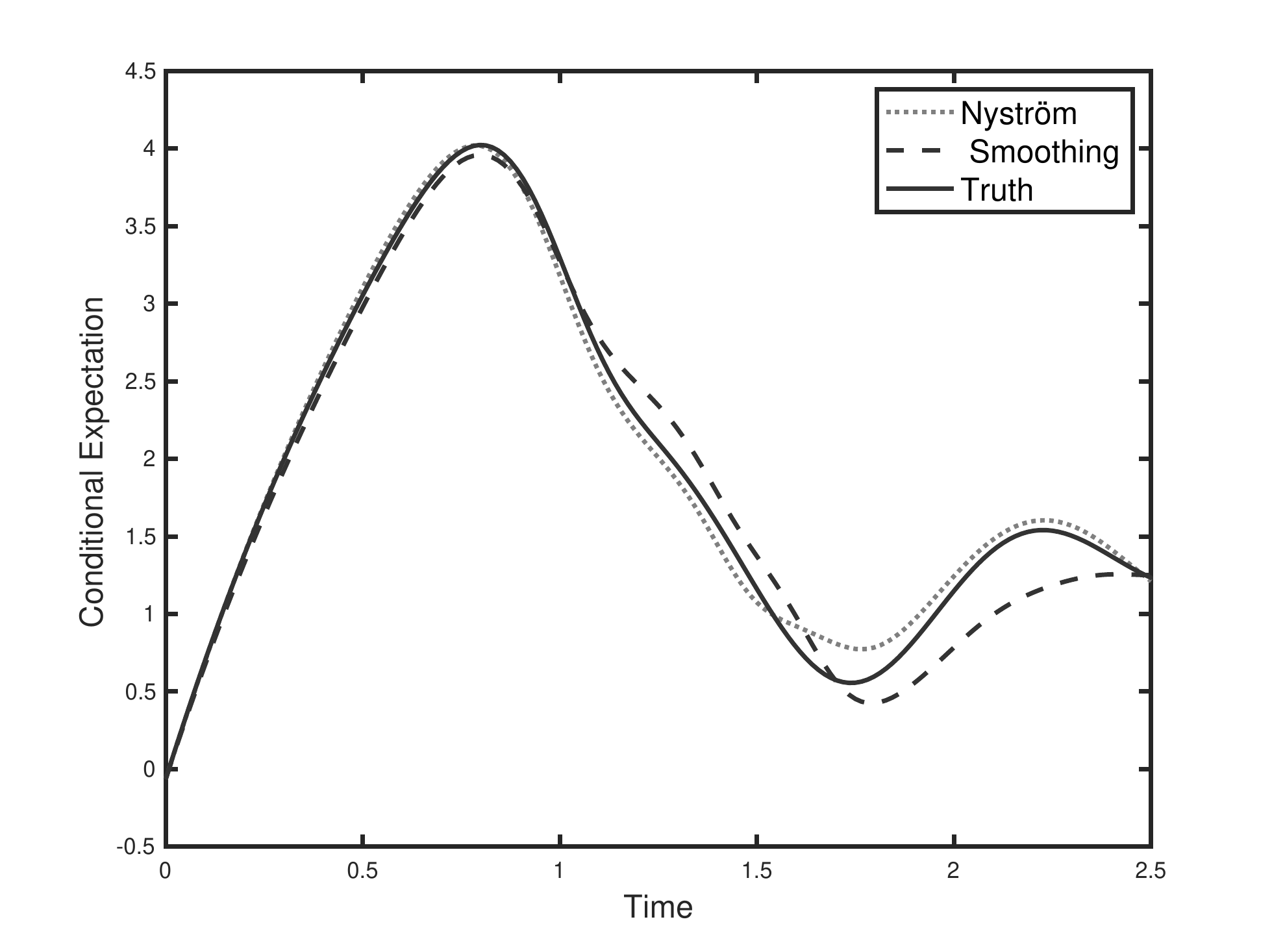}
&
\includegraphics[width=.49\linewidth]{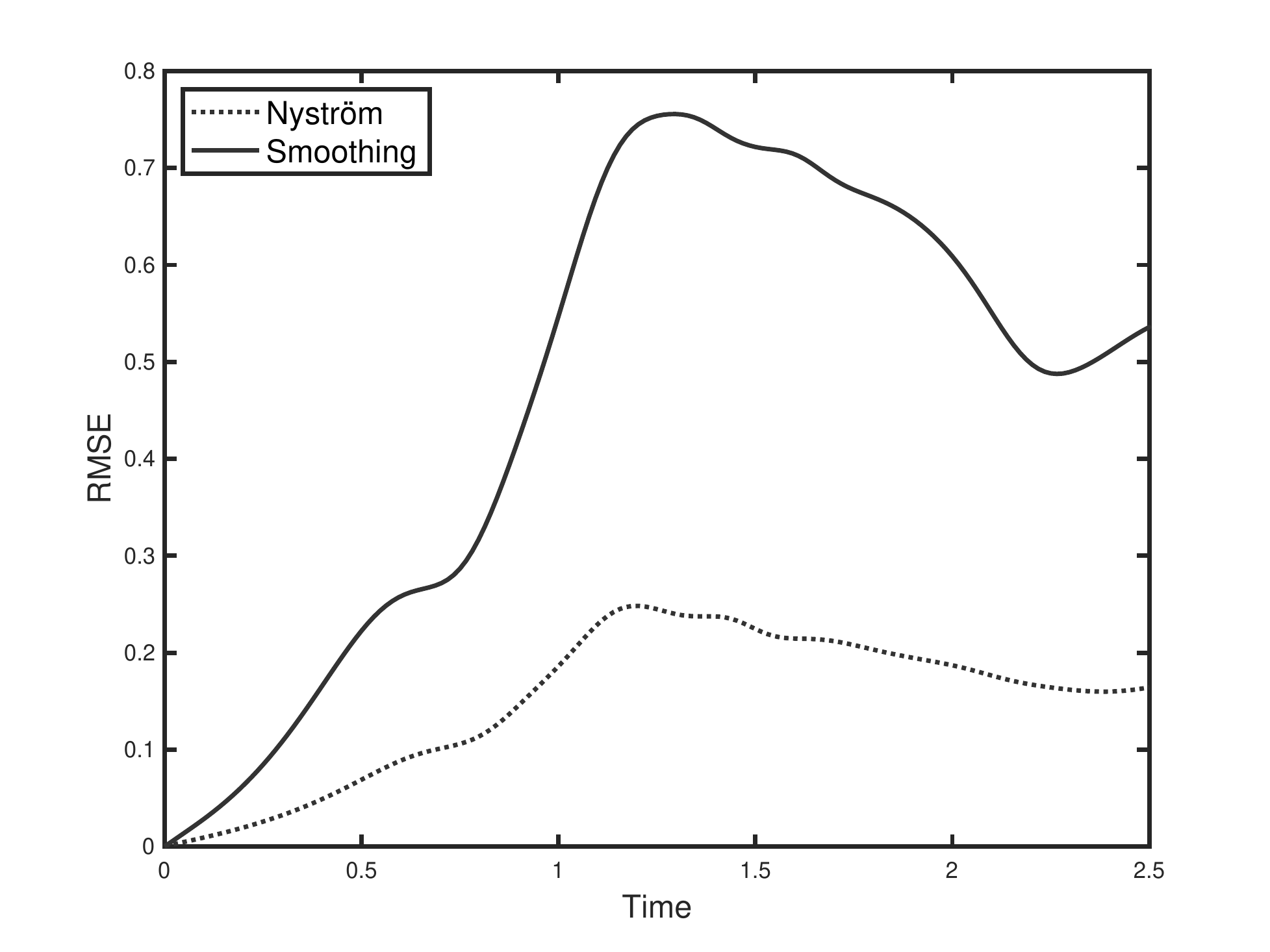}
\end{tabular}%
}
\caption{The Lorenz-96 example: (a) Comparison of the kernel smoothing estimate (Smoothing) and the Nystr\"om method of the conditional expectation of the first component of a particular out-of-sample trajectory, trained using $N=20,000$ training data. (b) The RMSEs (based on $N_{out}=2000$) of both estimators as functions of lead time forecast, constructed using $N=20,000$ data points.}
\label{L96MVN}
\end{figure}

Next, we will show that the full MZ equation \eqref{MZ}  can be constructed by an optimal least squares estimator of a regression framework with appropriate choice of covariate space. 

    \subsection{Mori-Zwanzig projection and delay-coordinate maps}\label{sec:estMZ}

    In the preceding subsection, we considered estimating the conditional expectation $\mathbb{E}[X_{t}\mid X_0]$  and showed that it can be numerically approximated using the time series data. In this subsection, we are interested in predicting the realization of $x_t$ in \eqref{MZ2}. While the MZ representation suggests that the  solution depends on the entire historical data, for practical computation, finite-memory models to collectively represent these terms as a finitely supported function is desirable. Since the memory terms depend on the orthogonal dynamics (see Eq.~\eqref{MZ}), such an approximation can be achieved, e.g, by delta function approximation \cite{hijon2010mori}, Krylov subspace approximation \cite{LiXian2014}, rational approximation \cite{hl:15}, or a series representation of the orthogonal dynamics \cite{LiStinis2019,zhu2018faber}. Note that while a finite-dimensional (matrix) representation is the computational object of interest, a series representation may not converge since it involves expansion of semigroups generated by unbounded operators.

On the other hand, we should point out that depending on the choice of the projection operator $P$, the explicit representation of the terms in the MZ equation, $(M_j)_{j=0}^{t-1}$ as well as the orthogonal dynamics $\Xi_t$, may or may not be easily translated into an efficient algorithm that yields a consistent approximation. As we showed in the preceding subsection, choosing $P=\mathbb{E}[\cdot|X]$ as an estimator will not yield an accurate approximation to $x_t$ since this estimator truncates the orthogonal dynamics. Other common choice of projection operators can be found in \cite{chk:02,Zwanzig2001}. For example, while the popular Mori projection, defined as $P = \langle X ,X\rangle_{H}^{-1}\langle \cdot,X\rangle_{H}X$, yields a linear model for $(M_j)_{j=0}^{t-1}$, the representation of the orthogonal dynamics in such a basis expansion may not be computationally tractable \cite{chuli:2019}. 

Recently, it was shown in \cite{lin2019data} that by choosing $P$ to be the Wiener projection, one can simplify the MZ equation so that only the Markovian term $M_0$ and orthogonal terms $\Xi_t$ remain, where $M_0$ is now a function that takes a delay coordinate of the observable. Building on this result, our intuition is to construct a non-decreasing sequence of projection operators $\{P_m: m\in \mathbb{N}\}$ which allows one to access the entire function space $H$ with a finite $m$ and a simple representation of the MZ equation. In what follows, we argue that delay-embedding theorem \cite{TAKENS2010345} provides a natural candidate for achieving this goal.    

To that end, we define the delay coordinate map $\mathbb{X}_m:\Omega\rightarrow \mathcal{X}^m$  by  $\mathbb{X}_m(\omega)=(X_{{-m+1}}(\omega),\ldots,X_{-1}(\omega),X_{0}(\omega))$, $X_i=U^i\circ X$, which we will consider as the covariate function. Simultaneously, we consider the response function $X_{t}:\Omega\to \mathcal{X}$, where $\mathcal{X}$ is the response space as in the preceding sections. Note that the optimal estimator for the map $\mathbb{X}_m\mapsto X_{t}$ is given by the conditional expectation $P_mX_t:=\mathbb{E}[X_{t}\mid \mathbb{X}_m]$. Under mild assumptions on the covariate $X$, the dynamical flow $ \Phi $, and the sampling interval $\Delta t$, the theory of delay-coordinate maps \cite{TAKENS2010345,MR1137425} states that $\mathbb{X}_m$ is a homeomorphism between the support, $M$, of the invariant measure and $\mathcal{X}^m$ for sufficiently large $m$. Consequently, the Borel sigma algebra on $M$ is identical to the sigma algebra generated by $\mathbb{X}_m$, $\sigma(\mathbb{X}_m)$. Thus, $X_t$ is measurable with respect to the sigma algebra generated by $\mathbb{X}_m$, which means that $P_m X_t:=\mathbb{E}[X_t\mid \mathbb{X}_m]=X_t$ is the identity map for sufficiently large $m$. 

Let $m$ be such that the embedding result stated above holds. Then, letting $P=P_m$ in \eqref{MZ}, the memory and the orthogonal terms vanish since they involve $Q_mUX = (I-P_m)UX = 0$. While $P_m$ is an identity operator, the MZ equation reduces to a contribution of a Markovian term,
\BEA
x_{i+1} = (P_mUX\circ\Phi^i)(\omega_0) = \mathbb{E}[UX | \mathbb{X}_m(\omega_i)] = M_0(x_{i-m},\ldots,x_i),\nonumber
\EEA
for some $M_0 \in V_m:=\{f:\mathcal{X}^m \to \mathcal{X}: f\circ\mathbb{X}_m\in H \}$. If we define the flow map $T$ on $\mathcal{X}^m$ induced by $\Phi$ as $T\circ\mathbb{X}_m(\omega_i):= \mathbb{X}_m\circ \Phi(\omega_i)$, then $M_0$ is the $m$th component of the flow map $T$, which is also the regression function of the supervised learning task $\mathbb{X}_m\mapsto X_t$.
 
In light of this connection, we will employ the nonparametric estimators discussed in Section~\ref{sec:regression} to approximate the regression function $M_0$ and numerically show that true trajectory of the observables can be recovered with adequate accuracy for sufficiently large $m$.

{\bf Hamiltonian system:}  As an example, consider again the Hamiltonian system in \eqref{HH}-\eqref{Hamiltonian}. Here, we are interested in approximating $\mathbb{E}[\omega_{(1)}(t)\mid \omega_{(1)}(-m+1),\ldots,\omega_{(1)}(-1),\omega_{(1)}(0)]$ for $t\in\mathbb{Z_+}$. Letting the response function be $X_t(\omega)=\omega_{(1)}(t)=x_t$ and the covariate function be $\mathbb{X}_m := (X_{-m+1},X_{-1},X_0)$, we can rewrite the conditional expectation of interest as, $\mathbb{E}[X_t|\mathbb{X}_m(\omega)]$. As before, we approximate this conditional expectation using the Nystr\"om method with $L=300$ eigenfunctions and the kernel smoothing estimator. The training data was generated by evolving $N=20,000$ initial conditions $\{\omega_0^{(k)}\}_{k=1,\ldots,N}$, drawn from the invariant measure $\mu$, for $m$ units in time, using RK4 with the same specification as in the previous example.  Figure~\ref{Ham16Dpart}(a) shows a particular out-of-sample trajectory along with the kernel smoothing estimates of the trajectory for various choices of $m$. Notice that as $m$ increases, the kernel smoothing estimator $\mathbb{E}[X_t|\mathbb{X}_m]$ approaches the true trajectory.
 In Figure~\ref{Ham16Dpart}(b), we show the RMSE, averaged over $N_{out}=10000$ out-of-sample verification points. Notice that the RMSEs are smaller as $m$ increases except at initial time. The worse performance at initial time is not so surprising since the kernel smoothing is not an interpolation method, and thus won't be consistent with the given initial conditions.
In panel (c), we show the quality of the prediction for $m=48$  for a particular trajectory. Notice that while the trajectory is well estimated by both methods up to about 6 time units, the kernel smoothing method is more accurate compared to the Nystr\"om method. The improved prediction of the kernel smoothing method compared to the Nystr\"om method at longer times is consistent for different length of memory, $m$, as shown by the RMSE metric in panel (d), computed over $N_{out}=10000$ out-of-sample verification points.

\begin{figure}
{\ \centering
\begin{tabular}{cc}
(a) & (b)  \\
\includegraphics[width=.49\linewidth]{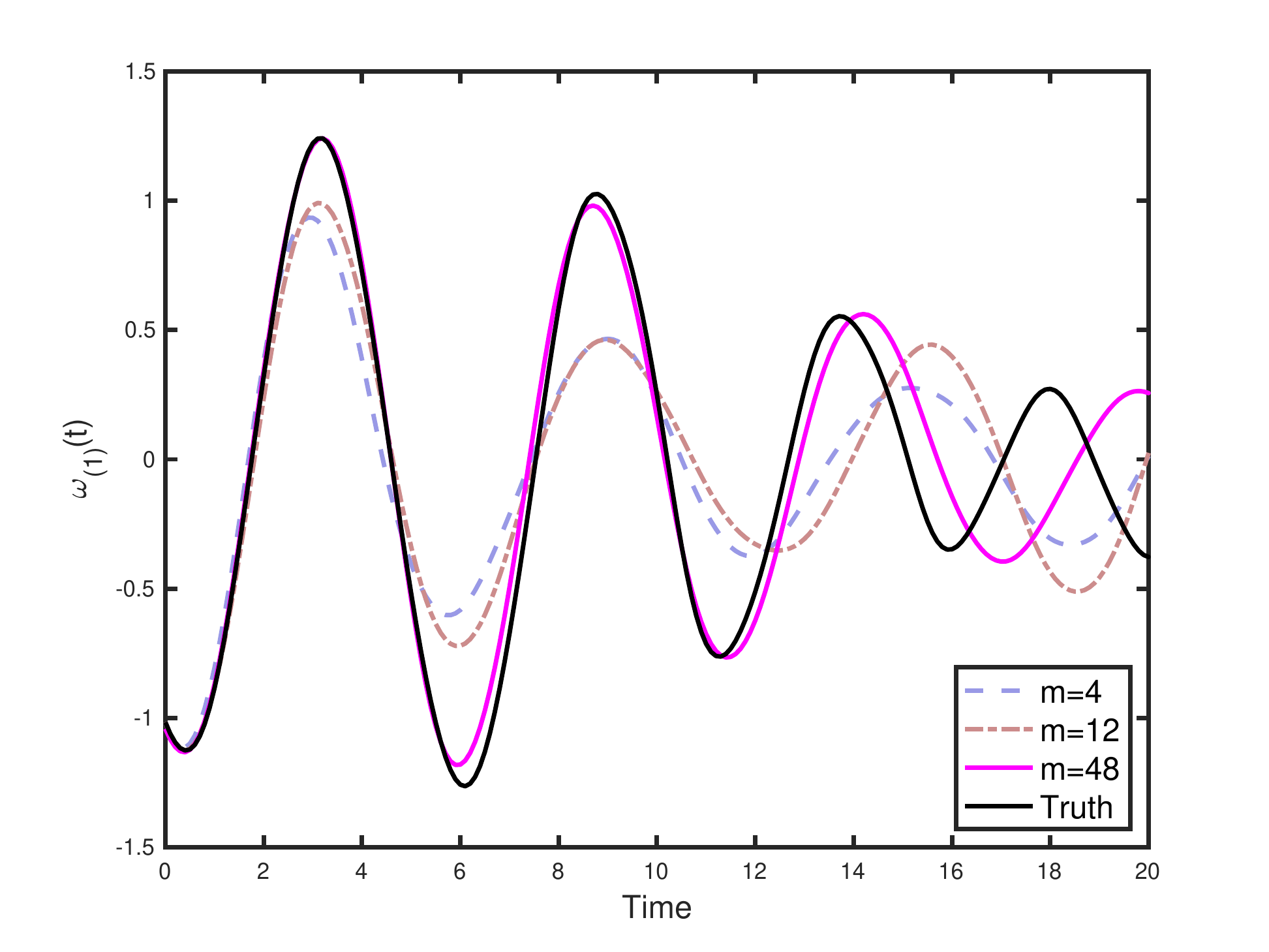}
&
\includegraphics[width=.49\linewidth]{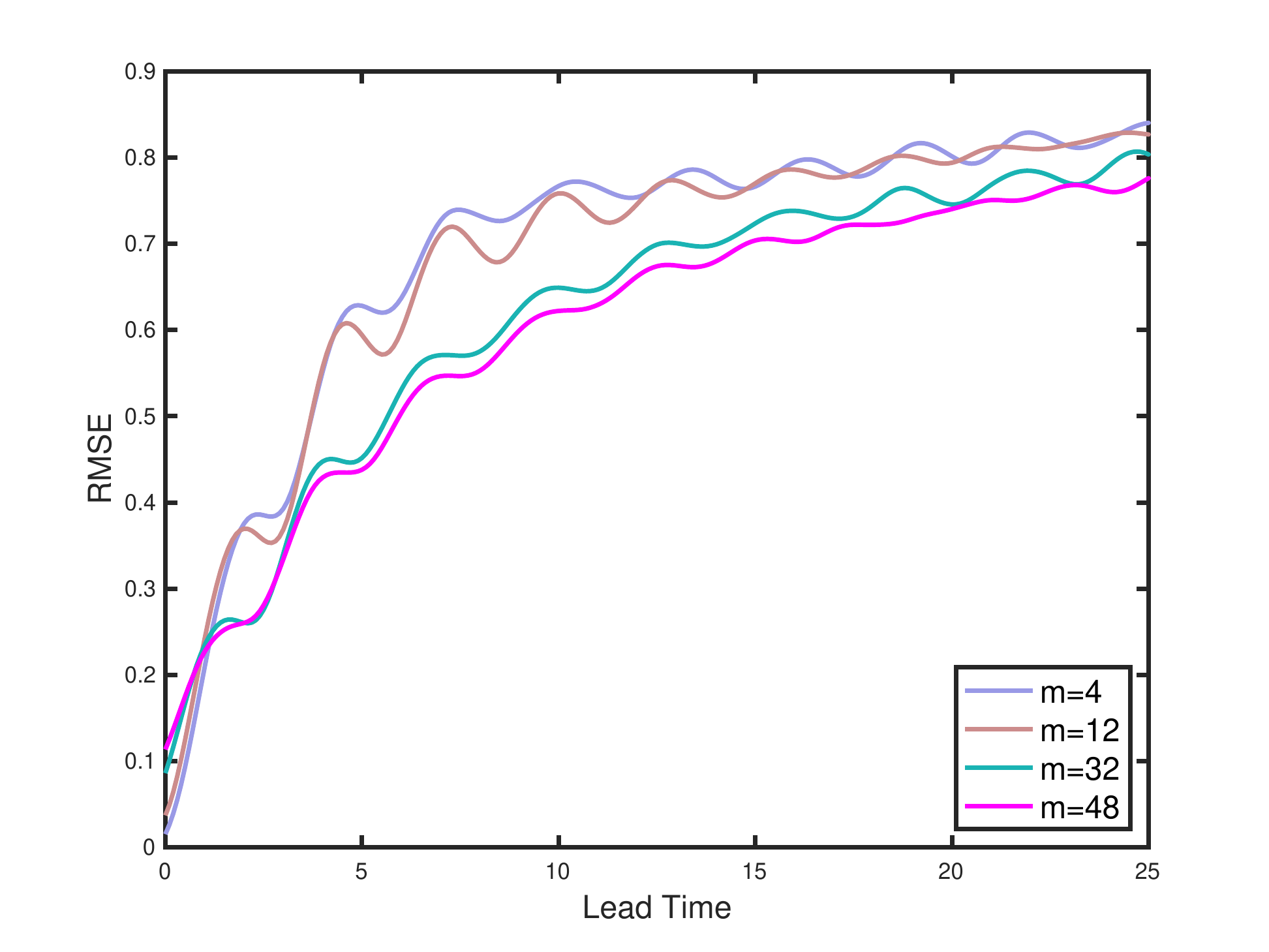}
\\
(c) & (d)  \\
\includegraphics[width=.49\linewidth]{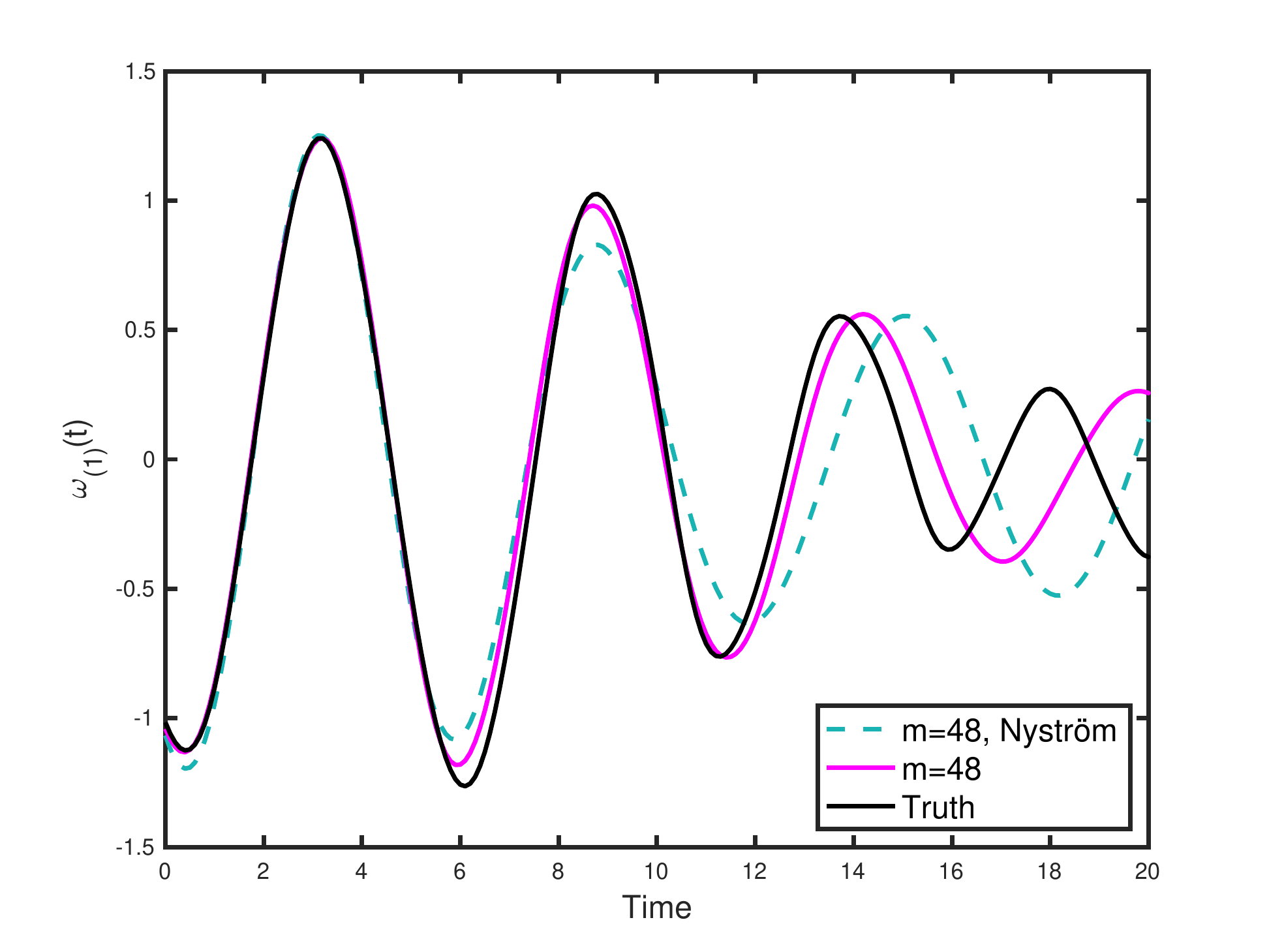}
&
\includegraphics[width=.49\linewidth]{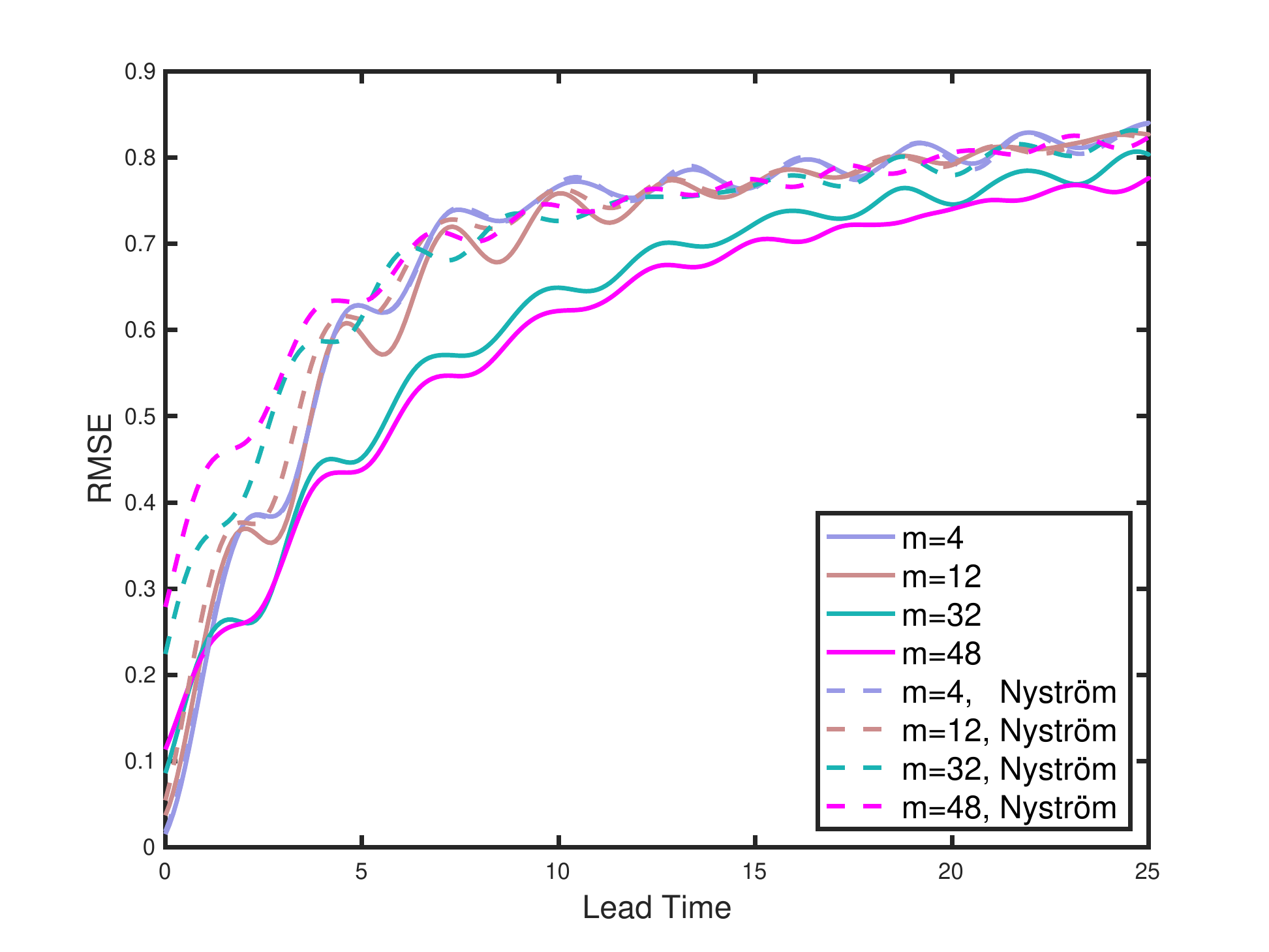}
\end{tabular}%
}
\caption{Hamiltonian Example: (a) The trajectory of the first component, $\omega_{(1)}(t)$ for a particular out-of-sample initial condition along with the kernel smoothing estimates using $m=4, 12,$ and $48$ past observations. (b) The RMSE between the true trajectory and the kernel smoothing estimates of the conditional expectation, calculated over $10,000$ out-of-sample points. (c) A comparison of the kernel smoothing estimate and the Nystr\"om estimate of the trajectory using $m=48$ past data points. (d) The RMSEs of the Nystr\"om and the kernel smoothing estimates of the trajectory for $m=4, 12, 32$, and $48$. Note that the RMSE plots show the RMSE for the lead time and omits the respective training windows for each $m$. }

\label{Ham16Dpart}
\end{figure}

{\bf The Lorenz-96 model:} In this example, we consider predicting the first component $\omega_{(1)}(t)$ of the five-dimensional Lorenz-96 model given by \eqref{L96model}, again with $F=8$.  As in the previous example, we will compare the Nystr\"om method with $L=300$ eigenfunctions and the kernel smoothing method in approximating $\mathbb{E}[X_t|\mathbb{X}_m(\omega)]$, where $X_{t}(\omega)=\omega_{(1)}(t)$. In this example, the delay-embedded data, $\mathbb{X}_m(\omega)$, are sampled from the invariant distribution of the system by running initial conditions sufficiently forward in time. In particular, we take $N=20,000$ samples from the invariant distribution and construct the conditional expectation using observations of the first component of the samples. Here, the time series $U^t\circ X(\omega_1)$ used for constructing the estimator were observed at time steps of $\Delta t=1/64$. In Figure~\ref{L96invPredict}(a), we show a prediction of a particular out-of-sample realization of $\omega_{(1)}(t)$. Notice, as with the previous example, that the quality of the kernel smoothing estimator increases with $m$, except at the initial time as seen in Figure~\ref{L96invPredict}(b). The RMSE in ~\ref{L96invPredict}(b)  was calculated over $N_{out}=10000$ out-of-sample initial conditions, also sampled from the invariant measure. As one can see, the kernel smoothing estimator is consistently more accurate than the Nystr\"om method for similar $m$.

In principle, we should point out that the Nystr\"om method can be improved with a larger number of eigenfunctions $L$. However, there is a practical issue in realizing improved accuracies. In our numerical tests, we do not find any meaningful improvement using any larger $L$ compared to the present results with $L=300$. We suspect that as the covariate space dimension increases (here, controlled by the number of delays $m$),  the Nystr\"om method requires an increasingly higher number of eigenfunctions $L$ to reconstruct the response at a given level of accuracy, and for the available number of training samples $N$, these eigenfunctions cannot be accurately estimated. Thus, unless a mechanism is in place to ensure that the response is well approximated by the leading kernel eigenfunctions, or the  eigenfunctions corresponding to large $L$ can be robustly estimated with modest amounts of data (both of which are highly nontrivial problems), there may be practical limitations to improving the performance of the Nystr\"om method simply by increasing the number of eigenfunctions employed. From these numerical results, we conclude that the kernel smoothing method (which requires much less computational effort) is an effective alternative when an accurate estimation of the eigenfunctions is not available.   

 \begin{figure} 
 {\ \centering
\begin{tabular}{cc}
(a) & (b)  \\
\includegraphics[width=.49\linewidth]{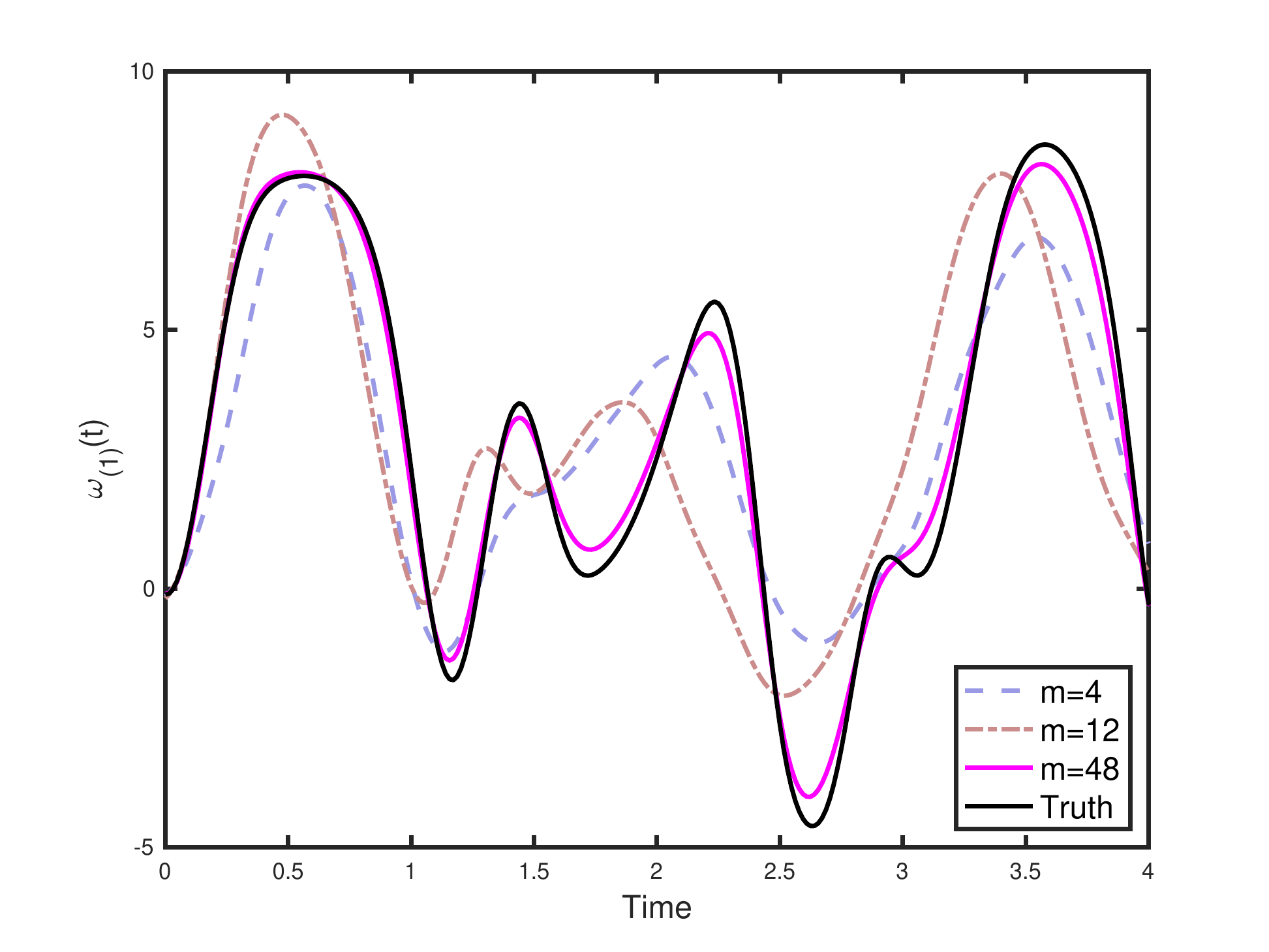}
&
\includegraphics[width=.49\linewidth]{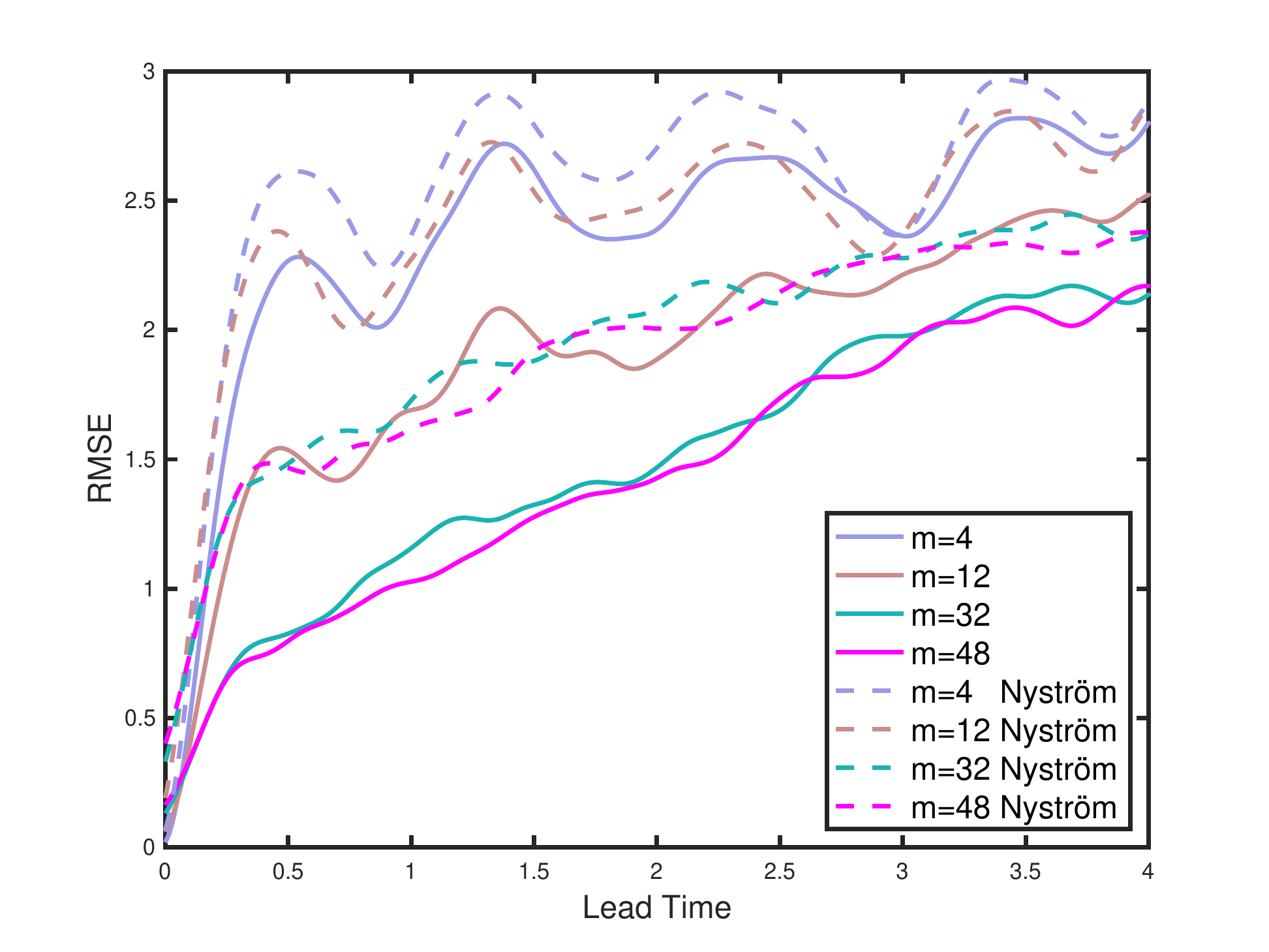}
\end{tabular}%
}
\caption{5D Lorenz-96 example:  (a) Predicting a particular out-of-sample trajectory of the $5$-d Lorenz-96 system with $F=8$ via the kernel smoothing estimate of the conditional expectation using $m=4,12,$ and $48$ past observations. The training and testing data were both drawn from the invariant distribution. (b) The empirical RMSE of the kernel smoothing estimates with $m$ past observations, calculated over $10,000$ out-of-sample points. (c) The Empirical RMSE of both the Nystr\"om and the kernel smoothing estimate  for $m=4, 12, 32$ and $48$.}
\label{L96invPredict}
\end{figure}

 \section{Smoothing and predicting with noisy data}\label{sec4}

To apply the prediction framework discussed in preceding section to real applications, one has to take into account that the available data set is most likely corrupted by noise. To overcome this issue, we design a nonparametric state estimation method, which we will call a smoother, to denoise the data. We should point out that we adopted the terminology smoother since the object of interest is the conditional expectation of the classical Bayesian smoothing problem \cite{sarkka2013bayesian}, which is different than the kernel smoothing method described in Section~\ref{sec:regression}. In Section~\ref{subsection41}, we describe a nonparametric smoother, formulated using the nonparametric regression framework discussed in Section~\ref{sec:regression}. Subsequently, in Section~\ref{subsection42}, we numerically verify the prediction skill of the framework in Section~\ref{sec:regression} where the estimator is trained using the smoothed data obtained from the method in Section~\ref{subsection41}.

\subsection{A nonparametric smoother}\label{subsection41}

We consider smoothing noisy time series observations of the form

\BEA
z_t=x_t+\theta_t,  \quad\quad t=1,\ldots,N,\nonumber
\EEA
where $\theta_t:=\Theta_t(\alpha)$ are realizations of a centered random variable $\Theta_t: A\to\mathcal{X}$ that is independent of $x_t$.  As in Section~\ref{sec:prediction},  $x_t=X_t(\omega)$ where $X_t:\Omega\rightarrow \mathcal{X}$, $\Omega\subset \mathbb{R}^n$, $\mathcal{X} = \mathbb{R}^p$ and $p \leq n$. The goal here is to construct the smoother $\mathbb{E}[X_k\mid z_1,\ldots,z_{m}]$, where $0 \leq k\leq m-1$, and use it as an estimator for $x_k$. In this application, we assume that the full vector $\omega_0$ of initial conditions is drawn from an invariant density. To pose this smoothing problem in the regression framework presented in Section~\ref{sec:regression}, we take the covariate space $\mathcal{Z}_m = (\mathbb{R}^p)^m$ to be the range of the covariate mapping $\mathbb{Z}_m: A\times\Omega \rightarrow \mathcal{Z}_m$ given by $\mathbb{Z}_m(\alpha,\omega):=\{Z_1(\alpha,\omega),Z_2(\alpha,\omega),\ldots,Z_{m}(\alpha,\omega)\}$, where $Z_k(\alpha,\omega)=X_k(\omega)+\Theta_k(\alpha)$. We also let the response space $\mathcal{X}_k$ to be the range of $X_k=\mathbb{R}^p$. Then the least squares estimator is given by $\mathbb{E}[X_k\mid \mathbb{Z}_m]$. If the distribution of $x_k$ is invariant, then the covariate space $\mathcal{Z}_{m}$ and the response space $\mathcal{X}_k$ do not depend on time so the optimal estimator can be trained once using training data drawn from the invariant density.

In general, the noise-free time series $x_t$ is not available for training. Given this constraint, we consider estimating $\mathbb{E}[Z_k\mid \mathbb{Z}_m]$ instead. Denote the available noisy data by $\vec{z}=\{z_1,\ldots,z_{N}\}$ and let $\vec{x}=\{x_1,\ldots, x_{N}\}$,  and $\vec{\theta}=\{\theta_1,\ldots, \theta_{N}\}$ be the uncorrupted data, and the noise, respectively. We employ the VBDM algorithm to obtain the basis functions $\hat{u}_{j,N}(z_{i+1},\ldots,z_{i+m}) = \hat\phi_{j,N}(\alpha_i,\omega_i)$, for $i=1,\ldots,N$ and then represent $E[Z_k\mid \mathbb{Z}_m]$ as a superposition of these basis functions. We motivate the construction of this conditional expectation by noting that the diffusion maps algorithm is robust to low noise perturbations (see Criterion~5 of \cite{cl:06}); more specifically, the error in the spectrum of the graph Laplacian can be controlled as long as the size of perturbation $|\theta_t |< \sqrt{\epsilon}$. Assuming that this argument holds for the estimation of the eigenvectors, we can reasonably expect that $\hat{u}_{j,N}(z_{i+1},\ldots,z_{i+m})= \hat\phi_{j,N}(\alpha_i,\omega_i) \approx  \phi_{j,N}(\omega_i)= u_{j,N}(x_{i+1},\ldots,x_{i+m})$ when the noise size is smaller than $\sqrt{\epsilon}$. 

For a particular class of dynamical systems, the noise robustness of the graph-theoretic techniques can be considerably strengthened by performing delays. In \cite{Giannakis19}, it was shown that if the Koopman operator of the dynamical system on $ \Omega $ has a pure point spectrum, and the noise is i.i.d.\ with finite first four moments, the pointwise estimator for the graph Laplacian determined from the noisy data can be made to agree with the noise-free estimator at any desired tolerance by increasing the embedding window length $m$. Here, we do not assume that the dynamics has pure point spectrum, so the estimates in \cite{Giannakis19} do not necessarily apply, but we can heuristically deduce that, 
\BEA
\langle \vec{z},\hat\phi_{j,N} \rangle_{L^2(\hat{\mu}_N)} &:=& \frac{1}{N} \sum_{i=1}^{N} z_i \hat{u}_{j,N}(z_{i+1},\ldots,z_{i+m}) \nonumber \\
&\approx & \frac{1}{N} \sum_{i=1}^{N} z_i u_{j,N}(x_{i+1},\ldots,x_{i+m}) \nonumber \\
&=& \frac{1}{N} \sum_{i=1}^{N} x_i u_{j,N}(x_{i+1},\ldots,x_{i+m}) + \frac{1}{N} \sum_{i=1}^{N} \theta_i u_{j,N}(x_{i+1},\ldots,x_{i+m})  \nonumber\\
&=& \langle \vec{x},\phi_{j,N} \rangle_{L^2(\mu_N)},\nonumber
\EEA
due to the fact that $\theta_i$ is independent of $x_i$. Here, $\hat{\mu}_N = \sum_{i=1}^N \delta_{\alpha_i,\omega_i}/N$ is the discrete sampling measure on noisy data, whereas $\mu_N$ is the discrete sampling measure on uncorrupted data. This suggests that we can approximate the smoother $\mathbb{E}_{L,N}[X_k \mid \mathbb{Z}_m]$ as, 
\BEA
\mathbb{E}_{L,N}[X_k \mid \mathbb{Z}_m] := \sum_{j=0}^L \langle \vec{x},\phi_{j,N} \rangle_{L^2(\mu_N)} \phi_{j,N}  \approx \sum_{j=0}^L \langle \vec{z},\hat\phi_{j,N} \rangle_{L^2(\hat\mu_N)} \phi_{j,N} = \mathbb{E}_{L,N}[Z_k \mid \mathbb{Z}_m].\nonumber
\EEA
Here, the approximation is due to the fact that the construction of the estimator is based solely on the noisy data. A more detailed error analysis is an open problem that is beyond the scope of this paper. Note that the kernel smoothing estimator discussed in Section~\ref{sec::kernelsmooth} does not possess the robustness-to-noise property that the VBDM basis does and furthermore the kernel smoothing estimate of $z_k$ is not an approximation of the kernel smoothing estimate trained using $x_k$. Thus, an application of the kernel smoothing estimator discussed in Section~\ref{sec::kernelsmooth} trained solely on noisy data would not yield a good approximation to the desired smoother, $\mathbb{E}[X_k\mid \mathbb{Z}_m]$.

In the remainder of this subsection, we will demonstrate the effectiveness of this smoother in recovering $x_k$ from an out-of-sample sequence $(z_1,\ldots,z_m)$ of noisy observations.  We will show numerically that choosing $m>1$ reduces the RMSE when estimating $x_k$. We will also demonstrate the sensitivity of the RMSE as the parameter $k$ is varied. 
To verify the accuracy of the proposed smoother, we compare the RMSE of this method to  the RMSEs of the Ensemble Kalman Filter \cite{evensen:94} and 4D-Var \cite{lorenc:86}, both of which are very popular data assimilation methods that are operationally used in weather forecasting. 

{\bf Smoothing noisy observations of the Lorenz-63 system:} Consider the Lorenz-63 system given by
\BEA
\frac{d\omega_{(1)}}{dt}&=&\sigma(\omega_{(2)}-\omega_{(1)})\nonumber\\
\frac{d\omega_{(2)}}{dt}&=&\omega_{(1)}(\rho-\omega_{(3)})-\omega_{(1)}\label{L63} \\
\frac{d\omega_{(3)}}{dt}&=&\omega_{(1)}\omega_{(2)}-\beta\omega_{(3)} \nonumber
\EEA
with the standard parameters $\sigma=10, \rho=28$ and $\beta=8/3$ that give rise to the famous ``butterfly"-like attractor \cite{lorenz:63}.
In this example, we are interested in estimating $\omega_{(1)}$, from observations of the form $Z_t(\omega(0))=\omega_{(1)}(t)+\theta_t$, and we will use the VBDM algorithm to construct the smoother $\mathbb{E}_{L,N}[Z_k\mid \mathbb{Z}_m]$. Practically, given an out-of-sample sequence $\mathbf{z}^{out}_{t,m}=(z^{out}_{t+1},\ldots,z^{out}_{t+m})$ of consecutive noisy observations, we use the Nystr\"om method to evaluate $\mathbb{E}_{L,N}[X_{t+k}|\mathbf{z}^{out}_{t,m}]$, where $k=1,\ldots,m$ and use this as an estimator for $x^{out}_{t+k}$.  To smooth a long sequence of noisy observations, we independently apply the constructed conditional expectation on each $t$ and the corresponding $\mathbf{z}^{out}_{t,m}$ sequence of the trajectory. \comment{We note that with this parallel smoothers, we cannot recover the first $k-1$ and the last $m-k$ observations in the trajectory in a high-degree of accuracy, as we shall see.} In all of the following numerical experiments, the training data $\mathbb{Z}_m$ is constructed by taking sequential $m$ observations of $\omega_{(1)}(t)$ and adding $\theta_t$ to each of these elements.

In the first experiment, we use $N=12,000$ observations of $z_t$ with $\theta_t\sim \mathcal{N}(0,4)$ and construct the conditional expectation estimators, $\mathbb{E}_{L,N}[Z_k|\mathbb{Z}_m]$, with $m=5$ and $k=1,\ldots,5$ using $L=120$ eigenfunctions. The observation time step for $z_t$ is $\Delta t=0.1$.  We evaluate this smoothing operator on an out-of-sample trajectory of length $N_{out}=10,000$, corrupted by four noise types with variance approximately $4$ : (1) Gaussian noises $\mathcal{N}(0,4)$, (2) Student's-t noises with $8/3$ degrees of freedom, (3) Uniformly distributed noises over $(-\sqrt{48}/2,\sqrt{48}/2)$, and (4) Time varying noises of the type $2\sin(tU)$, where $U$ is uniformly distributed over $[-1/2,1/2]$. In each of the following experiments, the basis is constructed using data corrupted by $\mathcal{N}(0,4)$ noise. This choice of standard deviation, 2, is roughly 25\% of the climatological standard deviation, $7.9246$. Table~\ref{Lftablegauss} shows the RMSEs of the smoothers $\mathbb{E}[Z_k|\mathbb{Z}_5]$ when the observed component is corrupted by$\mathcal{N}(0,4)$ noise for $1 \leq k \leq 5$. From Table~\ref{Lftablegauss}, one can see that the smallest error is obtained for $k=2$. Table~\ref{Lftablenongauss} shows the RMSEs of the smoother $\mathbb{E}[Z_2\mid \mathbb{Z}_5]$ in the cases when the observed component is corrupted by the four noise types mentioned previously. In Figure~\ref{L63fnoise}, we show the smoothed trajectories compared to the truth and noisy observations. We should point out that the $k=2$ smoother forces us to discard the first and the last $N_{out}-(m-2)$ observations in the trajectory. Note that the RMSEs shown in each of these tables are the errors in recovering $\omega_{(1)}$, computed by averaging the errors of time indices $k$ to $N_{out}-(m-k)$. 

As seen in Table~\ref{Lftablenongauss}, the smoother, $\mathbb{E}_{L,N}[Z_2|\mathbb{Z}_{5}]$ performs better than an ensemble Kalman filter with $64$ ensemble members constrained to observing the same one-dimensional noisy component  $\omega_{(1)}$ used for training the smoother (denoted by 1 observation in the table). We also found that the proposed smoother is more accurate than 4D-Var constrained to using only one observation $(\omega_{(1)}$ only) or two observations (both $\omega_{(1)}(t)$ and $\omega_{(2)}(t)$). In this table, for diagnostic purpose, we also report the RMSEs of EnKF and 4D-Var when noisy observations of all three components are available. One can see that, in this case, 4D-Var is superior; however, when only one component is observed, the proposed smoother (which requires no knowledge of the dynamics) is more accurate than both EnKF and 4D-Var. We should point out that in these numerical experiments the 4D-Var is implemented with $m=5$ so that the configuration is similar to that of the non-parametric smoother. 
 
Finally, we note here that if we instead use the kernel smoothing estimator described in Section~\ref{sec::kernelsmooth} to approximate the smoother $\mathbb{E}[Z_k\mid \mathbf{Z}_m]$, we find that the RMSE of smoothing the same observations with Gaussian noise (as in Table~\ref{Lftablenongauss}) is $1.5605$. Therefore, while the smoothed time series are less noisy than the observed data, the kernel smoothing based smoother performs worse than the EnKF with only $1$ observation.

 \begin{figure}
 {\ \centering
\begin{tabular}{cc}
(a) & (b)  \\
\includegraphics[width=.5\linewidth]{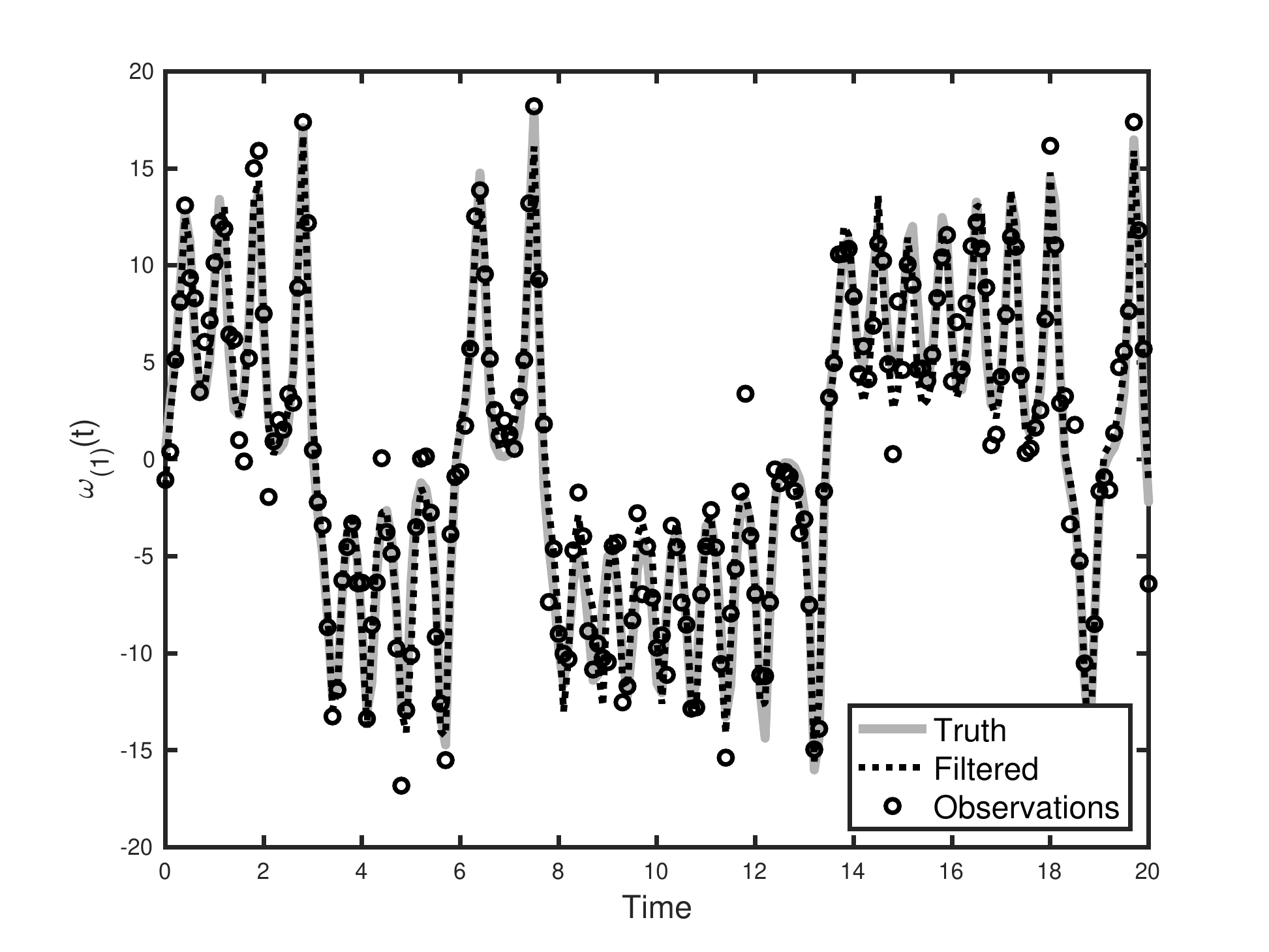}
&
\includegraphics[width=.5\linewidth]{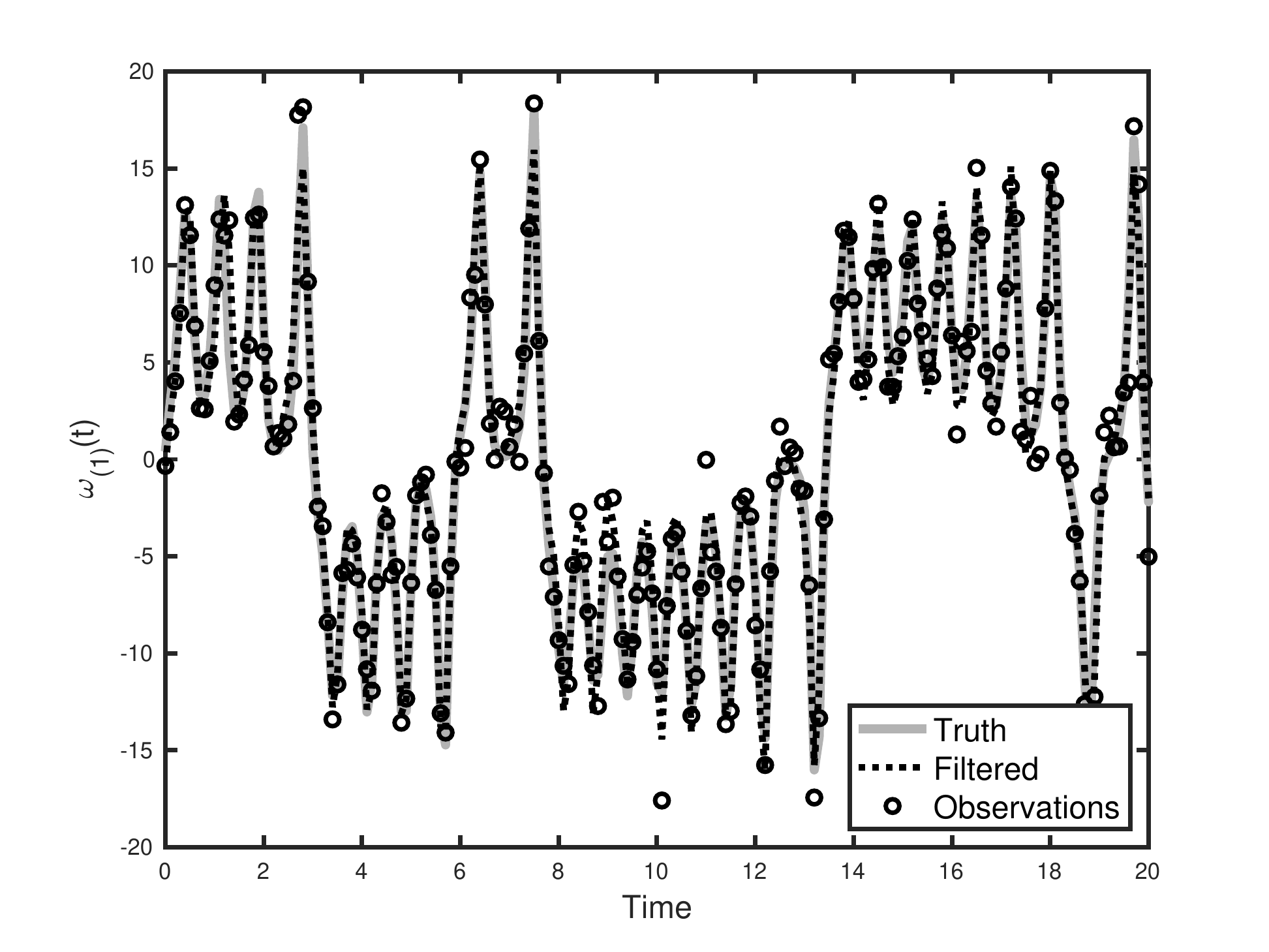}
\\
(c) & (d)  \\
\includegraphics[width=.5\linewidth]{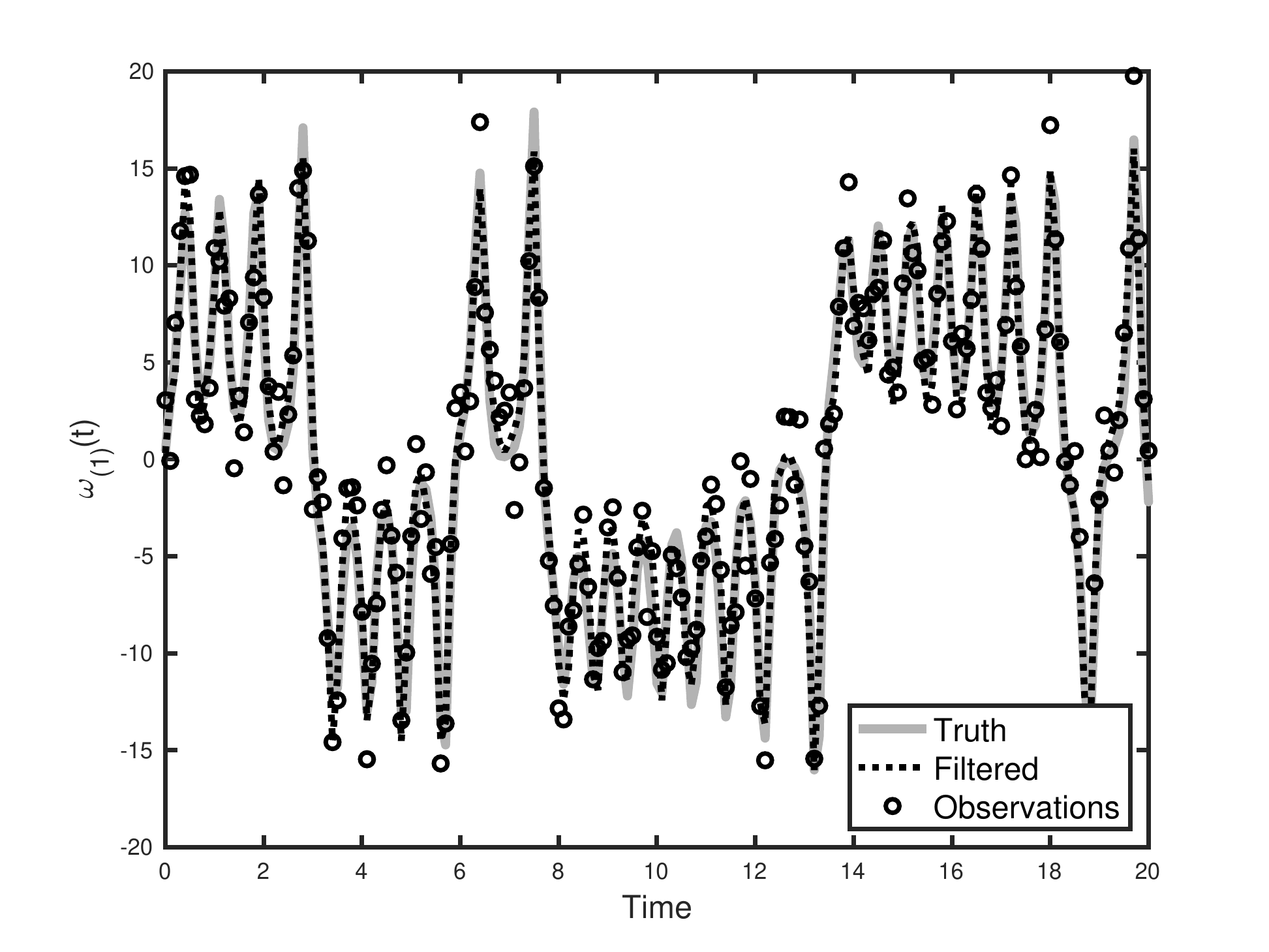}
&
\includegraphics[width=.5\linewidth]{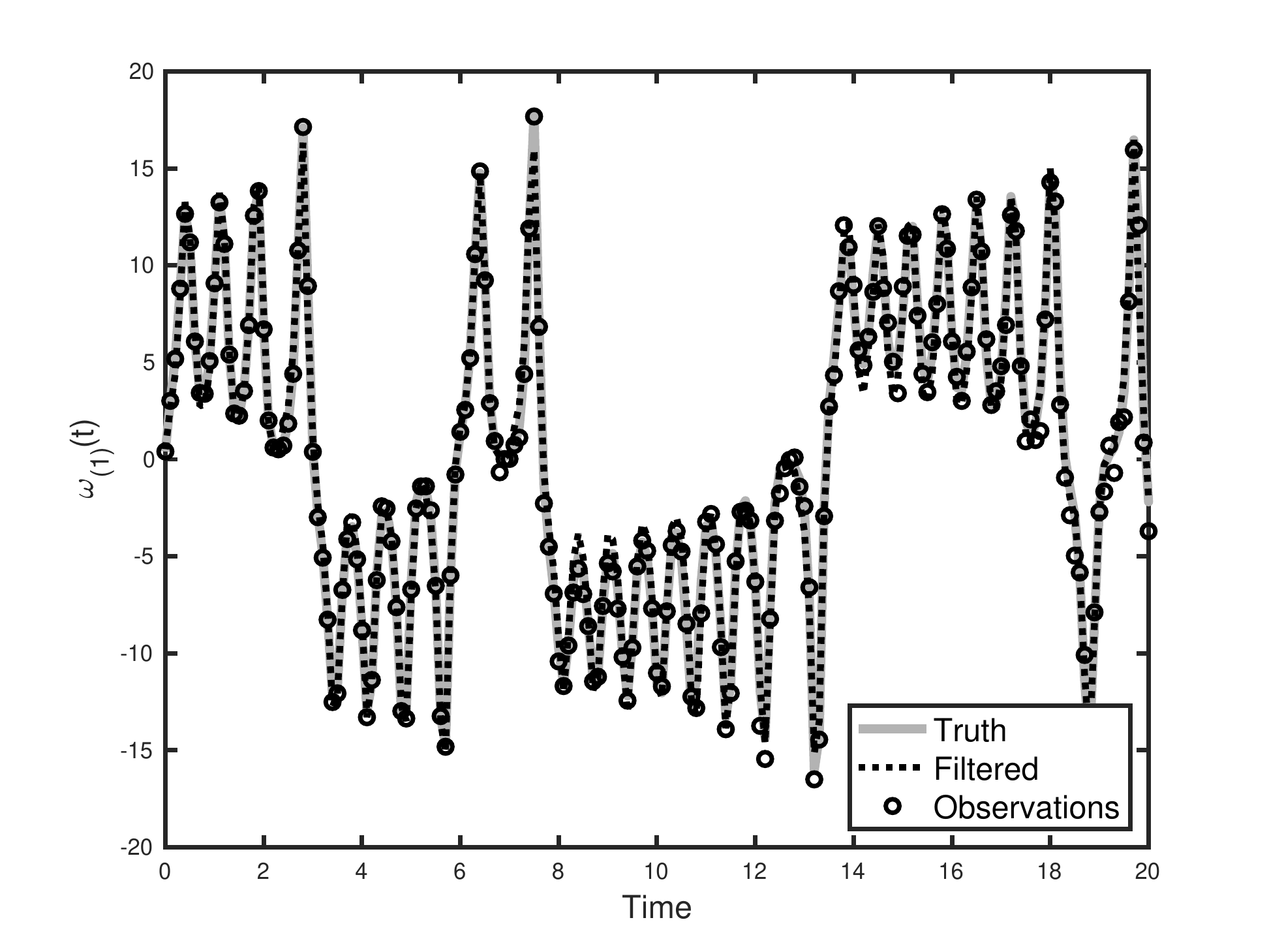}
\end{tabular}%
}
\caption{Smoothed trajectories of $\omega_{(1)}$, compared to the truth and noisy observations. Here, the same smoother, constructed using data corrupted by Gaussian noise $\mathcal{N}(0,4)$, was used to recover out-of-sample data corrupted by: (a) Gaussian noises $\mathcal{N}(0,4)$; (b) Student's-t noises with $8/3$ degrees of freedom; (c) Uniformly distributed noises over $(-\sqrt{48}/2,\sqrt{48}/2)$; and (d) Time varying noises of the type $2\sin(tU)$, where $U$ is uniformly distributed over $[-1/2,1/2]$. See Tables~\ref{Lftablegauss} and \ref{Lftablenongauss} for the RMSEs.  }
\label{L63fnoise}
\end{figure}

\begin{table} 
\centering
\begin{tabular}{ |c| c| c| c| c| c|  }
\hline
 k & 1& 2& 3 & 4& 5\\
  \hline
 RMSE & 1.3591& 1.0663&    1.1243&    1.2394&    1.4928\\
 \hline
\end{tabular}
\vspace{8pt}
\caption{RMSEs of the smoother for the $\omega_{(1)}$ component of the Lorenz-63 subjected to i.i.d  $\mathcal{N}(0,4)$ noise. The  smoother $\mathbb{E}_{L,N}[Z_k|\mathbb{Z}_m]$ was constructed using $N=12,000$ training data points, consisting of $m=5$ sequential observations and $L=120$ eigenfunctions. Each RMSE is averaged over out-of-sample data points with indices $k$ to $N_{out-(m-k)}$, where $N_{out}=10,000$ data points.}
\label{Lftablegauss}
\end{table}

\begin{table} 
\centering
\begin{tabular}{ |c| c| c| c| c| c|  }
\hline
  &Gaussian& Student's-t& Uniform& Time varying \\
  \hline
  Smoother m=5, k=2& 1.0663& 1.1339& 1.1406&    1.1455\\
 \hline
EnKF (1 obs) &1.3439& 1.6589 &  1.4055& 1.3704\\
\hline
EnKF(2 obs) &.7435& 0.7871&  .7282& .7538 \\
\hline
EnKF(3 obs) &.6343& .7806&  .6242 & .6674 \\
\hline
4D-VAR (1 obs) &2.0935& 2.2131 &  2.3187& 2.3145\\
\hline
4D-VAR (2 obs) &1.4971& 1.4099 &  1.6088& 1.7032\\
\hline
4D-VAR (3 obs) &.5436& .6437&  .5198& .7785\\
\hline
\end{tabular}
\vspace{8pt}
\caption{ RMSEs of the smoother for the $\omega_{(1)}$ component of the Lorenz-63 model subjected to: (1) Gaussian noises $\mathcal{N}(0,4)$, (2) Student's-t noises with $8/3$ degrees of freedom, (3) Uniformly distributed noises over $(-\sqrt{48}/2,\sqrt{48}/2)$, and (4) Time varying noise of the type $2\sin(tU)$, where $U$ is uniformly distributed over $[-1/2,1/2]$. The smoothing operator was constructed, as in Table~\ref{Lftablegauss}, by using $N=12,000$ training data points consisting of $m=5$ sequential observations, corrupted by Gaussian noise and $L=120$ eigenfunctions. The RMSEs were also calculated the same way as in Table~\ref{Lftablegauss}. The same underlying trajectory was used for all of the RMSE calculations.  The last six rows report benchmark results of applying the EnKF with $64$ ensemble members and 4D-Var with $m=5$, observing one to three noisy components, respectively.}
\label{Lftablenongauss}
\end{table}

{\bf Smoothing noisy observations of the Lorenz-96 system:} We consider the proposed smoother for $k=2$ to estimate the first component $\omega_{(1)}(t)$ of the $40$-dimensional Lorenz-96 system \eqref{L96model} with $F=8$. In this numerical experiment, the training data consist of observations of $\omega_{(1)}(t)$ perturbed by Gaussian noise, $\mathcal{N}(0,1)$ but, as in Lorenz-63 example, we apply the filter to out-of-sample trajectories with various noise types with variance close to $1$. This choice of noise standard deviation, 1, is roughly 25\% of the climatological standard deviation, $3.5868$.

For brevity, we only report the results for $m=6$, $k=3$, $L=200$, and $N=12,000$ training data, which we verified on three out-of-sample trajectories, each of length $N_{out}=10,000$ sequential observations.  These three trajectories were respectively generated by perturbing a single out-of-sample trajectory of length $N_{out}$ with $\mathcal{N}(0,1)$ noise, Student's-t noise with $10$ degrees of freedom, and noise drawn uniformly from $[-1.8,1.8]$. The training and testing time series were generated using RK4 with an observation time step of  $0.05$. See Table~\ref{L96table} for the RMSEs as well as benchmark results of applying EnKF and 4D-VAR in the same noise regimes. Note that the proposed smoother, $\mathbb{E}_{L,N}[Z_3\mid\mathbb{Z}_6]$, performs better than both EnKF and 4D-Var even when the these two schemes were allowed to observe $30$ noisy components.

\begin{table}
\centering
\begin{tabular}{|c|c|c|c|}
\hline
& Gaussian& Student's-t & Uniform\\
\hline 
Smoother $m=6, k=3$ & $.5958$&$.6100$&$ .5968$ \\
\hline
EnKF (10 obs) & $.7234$ & $0.7387$& $.7151$\\
\hline
EnKF (30 obs) & $.6229$ & $.6785$& $.6416$\\
\hline
EnKF (40 obs) & $.2492$ & $.2428$& $.2211$\\
\hline
4D-VAR (10 obs)&$2.4722$&$2.5051$&$2.4691$\\
\hline
4D-VAR (30 obs)&$2.0373$&$2.0764$&$2.0141$\\
\hline
4D-VAR (40 obs)&$.2933$&$.2983$&$.2822$\\
\hline

\end{tabular}
\vspace{8pt}
\caption{RMSEs of the smoother for the $\omega_{(1)}$ component of the Lorenz-96, with $F=8$, subjected to: (1) Gaussian noise $\mathcal{N}(0,1)$; (2) Student's-t noise with $10$ degrees of freedom; (3) uniform noise drawn from $[-1.8,1.8]$. The smoothing operator for $k=3$ was constructed using $N=12,000$ training data points consisting of $m=6$ sequential observations constructed by Gaussian noise and $L=200$ eigenfunctions. The RMSEs are calculated using the same underlying trajectory of length $N_{out}=10,000$. The last six rows report benchmark results of applying the EnKF with $64$ ensemble members and 4D-Var with $m=6$, observing 10-40 noisy components, respectively.}
\label{L96table}
\end{table}

Finally, we consider a more chaotic regime with $F=16$. We construct an estimator to smooth observations of $\omega_1(t)$ corrupted with Gaussian noise, $\mathcal{N}(0,1)$. The smoother is trained with parameters $L=300, N=20,000$ and $m=6$ using a set of training data corrupted by Gaussian noise, $\mathcal{N}(0,1)$. The RMSE is computed over  $N_{out}=10,000$ verification data which was generated by perturbing the true $\omega_{(1)}$ component of a trajectory  by Gaussian noise,$\mathcal{N}(0,1)$. 

In Table~\ref{LF16}, we report the RMSEs of the smoothing estimates using various choices of $1\leq k \leq 6$. Figure~\ref{LF16fig} shows the results of the smoother with $k=3$. As a benchmark, we report the results of applying  EnKF with $64$ ensemble members and 4D-Var with $m=6$, observing $10$, $30$ and $40$ components of the system in Table~\ref{LF16enkf}. Note that the estimator, which is solely constructed using noisy data $\omega_{(1)}$, with $k=3$ performs better than both the EnKF and 4D-Var estimates observing $30$ noisy components. 

\begin{table} 
\centering
\begin{tabular}{ |c| c| c| c| c| c|c|  }
\hline
 k & 1& 2& 3 & 4& 5& 6\\
  \hline
 RMSE & .8768& .6688&    .6543&    .7321& .7946&.8935\\
 \hline
\end{tabular}
\vspace{8pt}
\caption{RMSEs for smoothing the $\omega_{(1)}$ component of Lorenz-96 with $F=16$ subject to i.i.d  $\mathcal{N}(0,1)$ noise. The smoother is constructed using $L=300$ eigenfunctions, estimated from $N=20,000$ training data points consisting of $m=6$ sequential observations. The RMSEs are calculated over a noisy trajectory consisting of $10,000$ data points. The same out-of-sample trajectory was used for each of the RMSE calculations. }
\label{LF16}
\end{table}

 \begin{figure}
 \centering
\includegraphics[width=.5\linewidth]{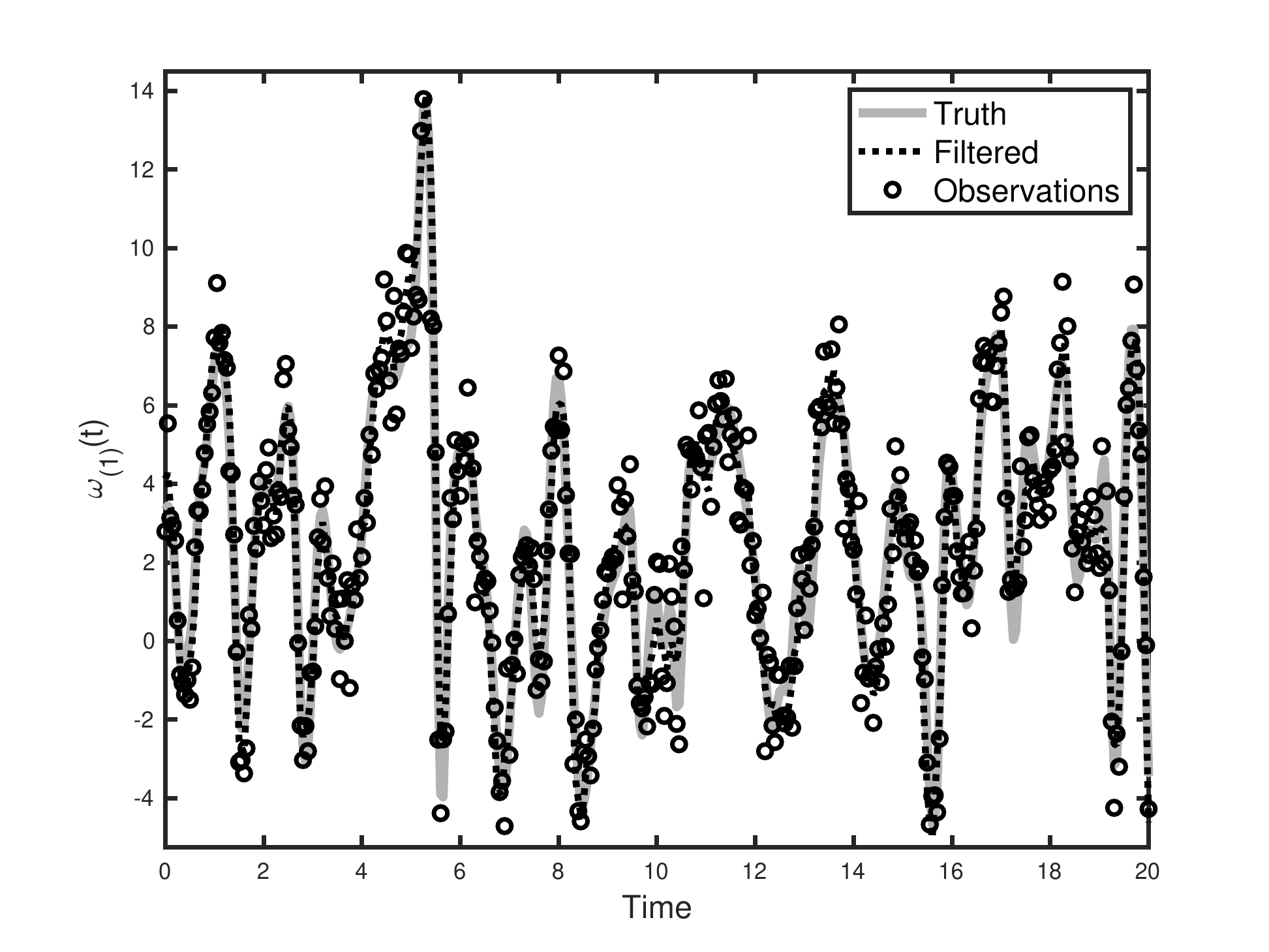}
\caption{ Smoothed trajectories of $\omega_{(1)}$ compared to the truth and noisy observations. The smoother is constructed as described in Table~\ref{LF16} for $m=6, k=3$. The RMSE is $.6543$}
\label{LF16fig}
\end{figure}

\begin{table}
\centering
\begin{tabular}{|c|c|c|c|}
\hline
No. Observation&10 & 30 &40\\
\hline
EnKF &.90558& .8802& .35061\\
\hline
4D-Var $m=6$ &4.1632& 4.052& .36414\\
\hline
\end{tabular}
\vspace{8pt}
\caption{RMSEs for smoothing $\omega_{(1)}$ component of Lorenz-96 with $F=16$ subject to i.i.d  $\mathcal{N}(0,1)$ noise. The configuration here is similar to that in Table~\ref{LF16}.}
 \label{LF16enkf}
\end{table}

\subsection{Prediction with smoothed training data}\label{subsection42}
In the preceding subsection, we introduced a smoother for denoising an out-of-sample sequence of noisy observations. In this subsection, we first denoise the noisy time series data using the smoother and then use the smoothed data as a surrogate for the true data to construct a kernel smoothing estimator as discussed in Section~\ref{sec::kernelsmooth}. 
In the following numerical result, we verify the prediction skill of employing these two steps given noisy time series observations of the five-dimensional Lorenz-96 example from Section~\ref{sec:estMZ} and compare it with the true prediction model.

The training data consists of $N=20,000$ sequences, each consisting of sequential observations of the $\omega_{(1)}$ component of the 5D Lorenz-96 model observed at time steps of $\Delta t=1/64$. However, unlike the example in Section \ref{sec:estMZ}, these observations are generated by perturbing the true $\omega_{(1)}(t)$ component by $\mathcal{N}(0,1)$ noise. We also generate an additional $N_{out}=10,000$ sequences as above for verification. We denoise both the noisy training and verification data using the smoother $\mathbb{E}_{L,N}[Z_k|\mathbb{Z}_{m_s}]$ with $k=3$, $m_s=6$ and $L=120$. Here, we defined $m_s$ in place of $m$ to avoid the conflict of notation in the later discussion which refers to $m$ as the memory length in the kernel smoothing estimates. The smoothed training data is then used to construct the kernel smoothing estimator described in Section~\ref{sec::kernelsmooth}. Subsequently, the kernel smoothing estimator is evaluated at each of the smoothed verification data. We should point out that while each smoothed verification data point required $m_s-k=3$ future observations, the numerical experiments below, which evaluate the prediction skill beyond $3\Delta t=3/64$ time units, are still valid prediction tests.

In Figure~\ref{smoothandpredict}(a), we show the kernel smoothing estimates for different memory length, $m$, and the true trajectory, all starting from a particular out-of-sample initial condition. While the prediction is less accurate than the one obtained from the kernel smoothing estimator constructed using noise-free data (see panel (c) for $m=48$), one can still see that the prediction is somewhat improved as the memory length $m$ is increased. This is also confirmed by the RMSEs plot in panel (b). In panel (d), we overlay the RMSEs in panel (b) with those corresponding to the prediction model constructed from the noise-free data (as in Figure~\ref{L96invPredict}). While the model constructed from the noise-free data is more accurate, the discrepancy between the noise-free model and the noisy model decreases as $m$ increases.

\begin{figure}
 {\ \centering
\begin{tabular}{cc}
(a) & (b)  \\
\includegraphics[width=.5\linewidth]{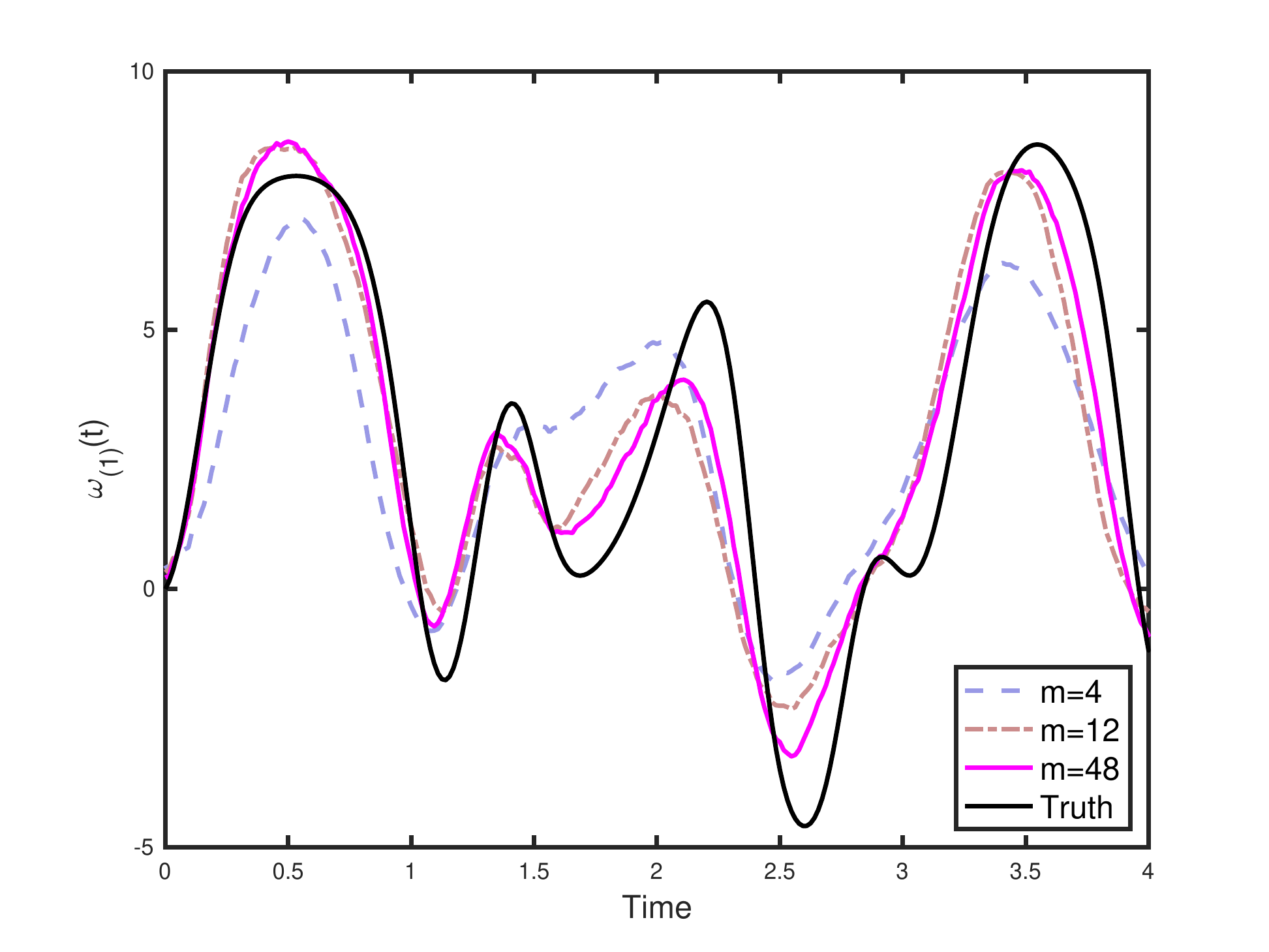}
&
\includegraphics[width=.5\linewidth]{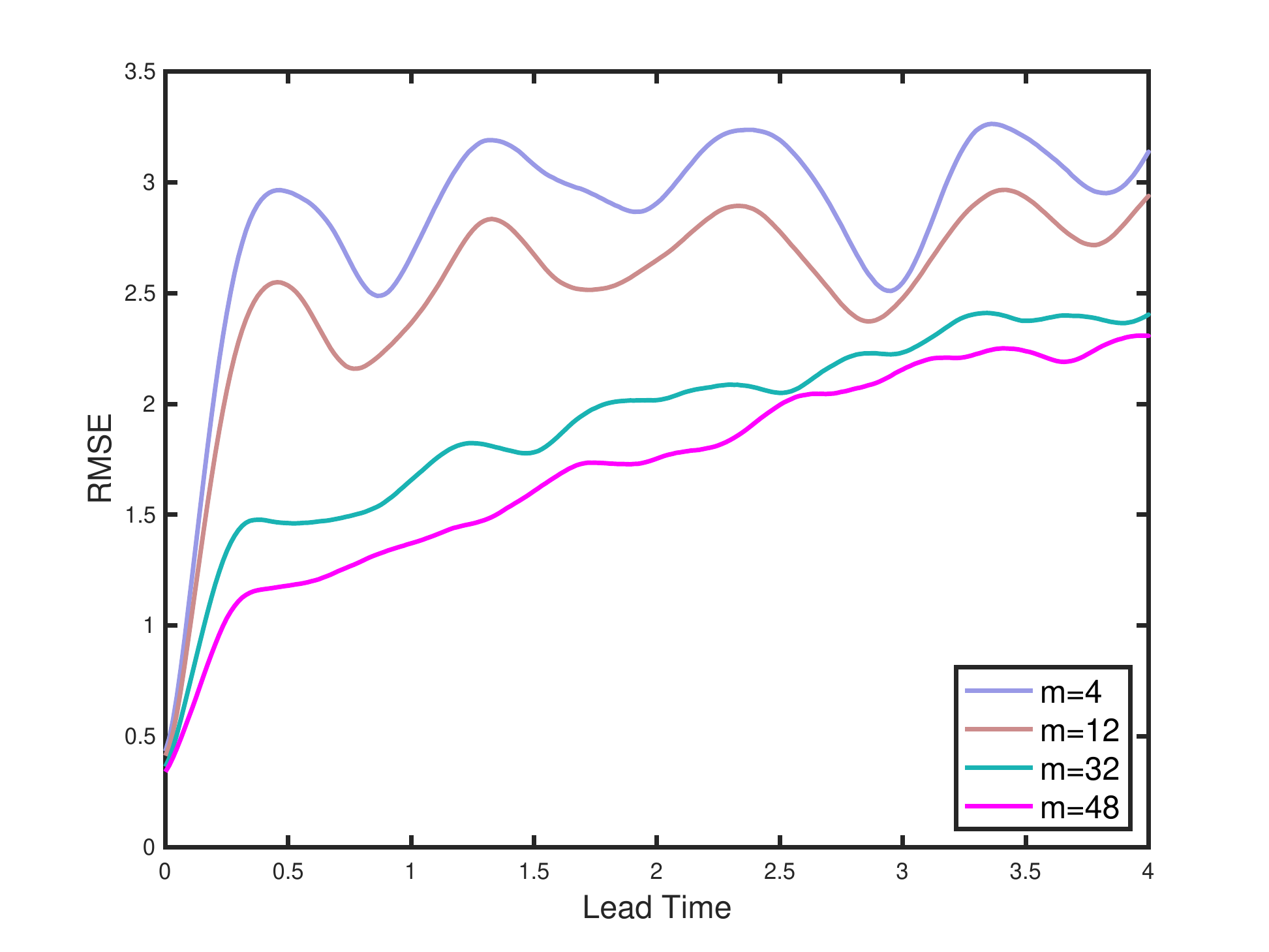}
\\
(c) & (d)  \\
\includegraphics[width=.5\linewidth]{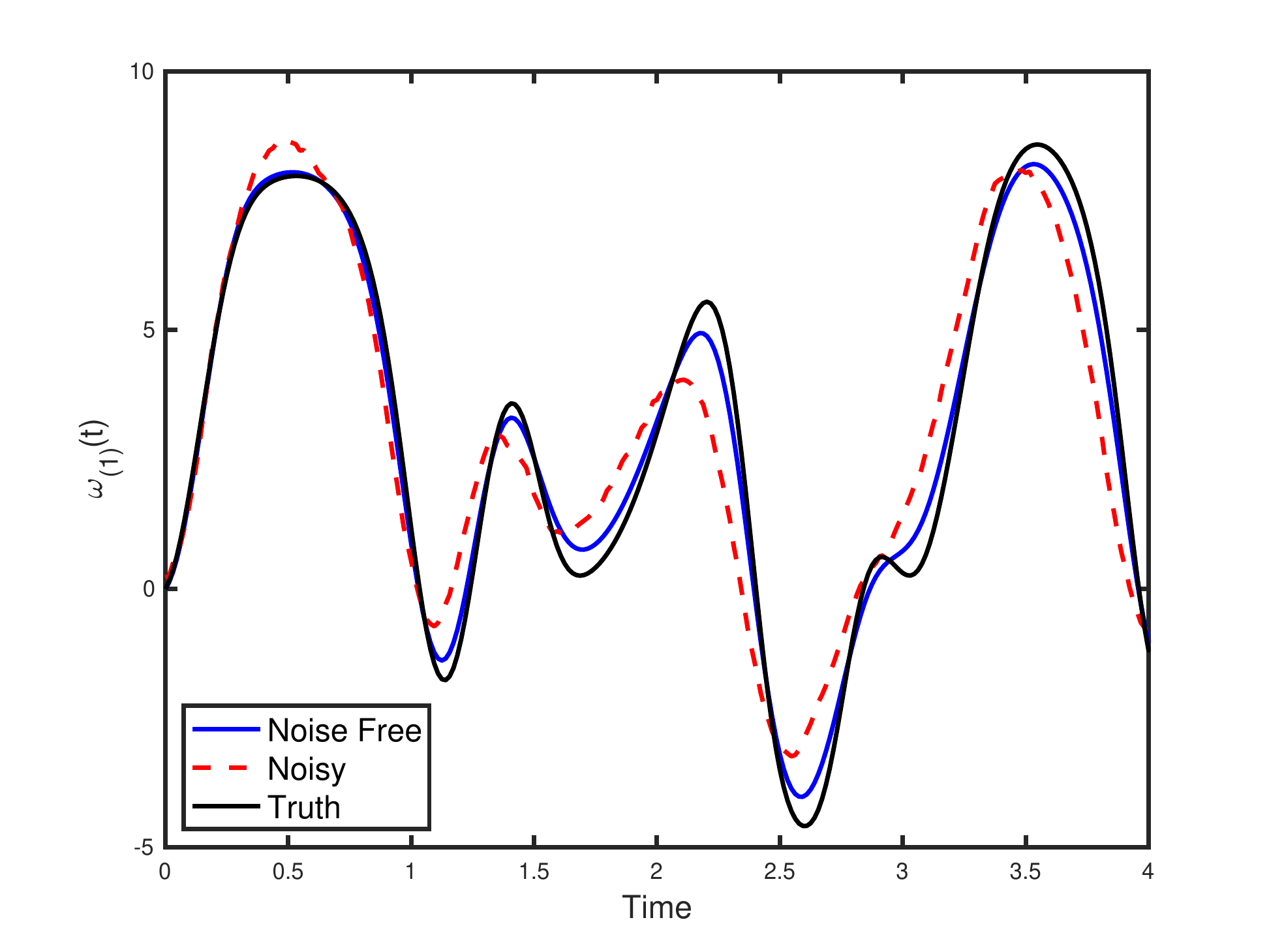}
&
\includegraphics[width=.5\linewidth]{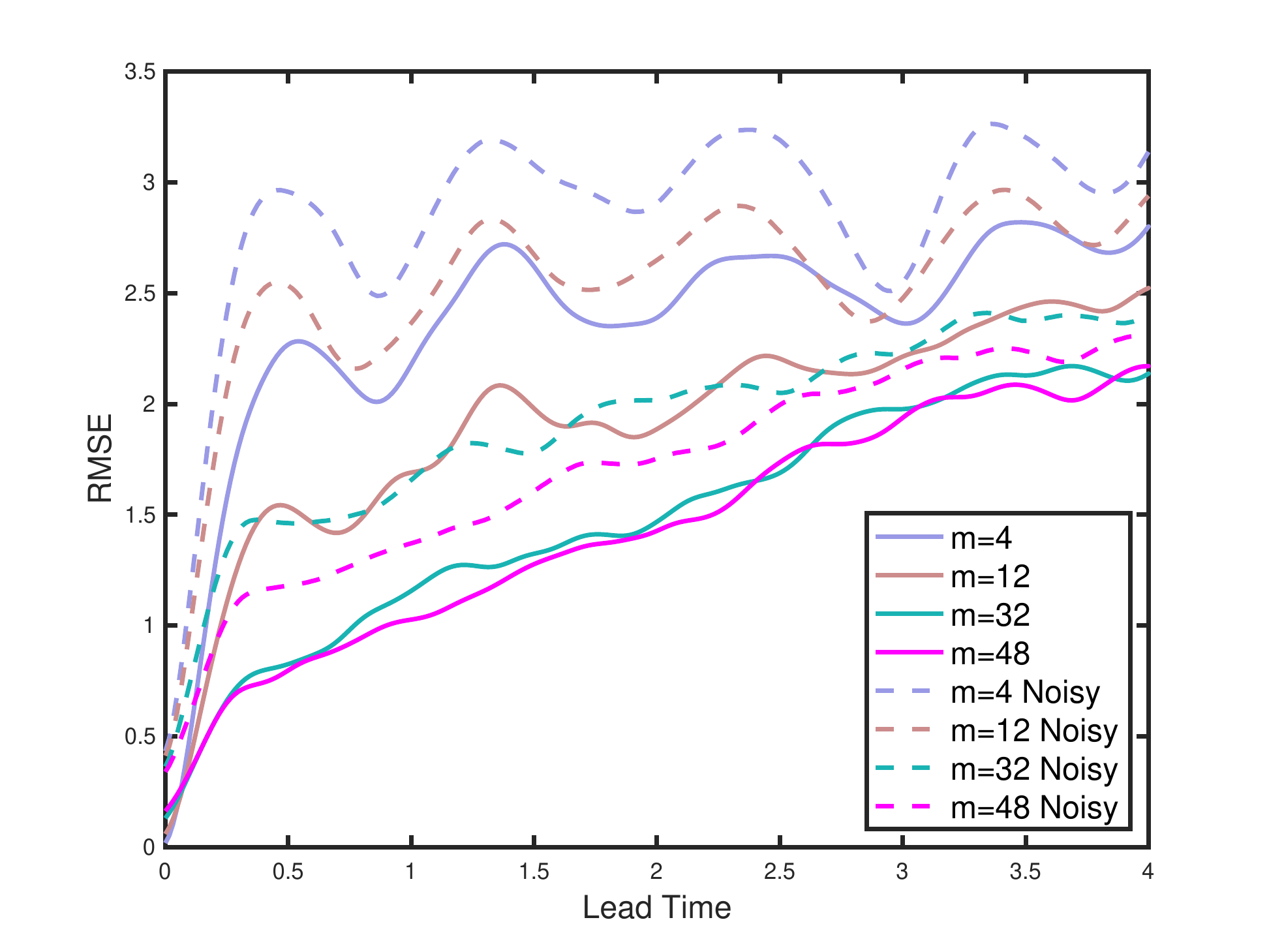}
\end{tabular}%
}
\caption{Result of applying the kernel smoothing estimate on noisy observations of the first component of the 5D Lorenz-96 model with $F=8$. The observations were first denoised using the nonparametric smoother and subsequently used for training the kernel smoothing estimator discussed in Section~\ref{sec::kernelsmooth}. (a) Trajectory of a particular out-of-sample prediction using the smoothed data for various memory lengths; (b) The corresponding RMSEs as functions of lead time for the kernel smoothing estimator constructed using smoothed observations; (c) A comparison of the predicted trajectory against that constructed from the corresponding noise-free data set for $m=48$; (d) The RMSEs in panel (b) overlaid with the corresponding RMSEs obtained from noise-free data experiment (exactly the RMSEs in Figure~\ref{L96invPredict}(b)).}
\label{smoothandpredict}
\end{figure}

\section{Summary}\label{sec5}

In this paper, we have studied aspects of forecasting and denoising of non-Markovian time series generated by partially observed dynamical systems. Within the context of kernel methods, where there is a mature theory on the consistency of empirical estimators, we have explored two distinct approaches, namely projection-based methods using the Nystr\"om out-of-sample extension approach and smoothing methods using Markov integral operators. We refer collectively to these approaches as kernel analog forecasting (KAF) \cite{zhao2016analog}, since in many ways they can be viewed as kernel-based generalizations of the classical analog forecasting approach proposed by Lorenz in 1969 \cite{lorenz1969atmospheric}.   

Previously, the consistency of KAF in the large-data limit was studied from the perspective of the Nystr\"om approach \cite{alex2019operatortheoretic}, where it was shown that, under suitable ergodicity assumptions, the KAF estimator approximates the conditional expectation of observables of partially observed systems acted upon by the Koopman operator, thus yielding statistically optimal forecasts in the $L^2$ (root mean square error) sense. Here, we have shown that an analogous consistency property also holds if KAF is implemented using a one-parameter family of Markovian kernels in a limit of vanishing kernel bandwidth. The advantages of this smoothing approach over the Nystr\"om estimator are that it is positivity-preserving, and avoids the need for a kernel eigendecomposition. The latter carries the risk that the number of eigenfunctions needed to approximate the conditional expectation is larger than what can be feasibly computed, both in terms of computational cost and statistical robustness. This practical limitation of the Nystr\"om method is particularly prone to occur when the covariate space is high-dimensional, as we have demonstrated with numerical experiments involving a Hamiltonian system and the L96 system.  In problems were the regression function for the response (predictand) projects well onto the leading kernel eigenfunctions, the Nystr\"om method is still the method of choice, however.

Next, we studied the connection between KAF and the Mori-Zwanzig (MZ) framework for reduced dynamical modeling with memory. In particular, a major challenge in MZ approaches is to construct appropriate projection operators onto the covariate space (resolved dynamics), simplifying the structure of the memory kernel and rendering it amenable to approximation. We have argued that kernel methods provide a natural way of constructing improved projections by embedding the covariate data into a higher-dimensional space using delay-coordinate maps. This leads to a family of projections whose ranges are nested subspaces of the $L^2$ space associated with the invariant measure, increasing with the number of delays, and recovering the whole of  $L^2$ using finitely-many delays. Correspondingly, the MZ equation in this limit involves only a Markovian term, which was estimated here nonparametrically.

Our third topic of study was smoothing and predicting with noisy data. We proposed a scheme whereby delay-coordinate maps are used to obtain high-quality kernel eigenfunctions from noisy data, which are then used for denoising via subspace projection. Using again the L96 model as a testbed, we demonstrated that this nonparametric denoising scheme oftentimes outperforms classical state-estimation methods such as the ensemble Kalman filter and the 4D-VAR approach. Once denoised, the data can be used to train skillful forecast models via the Nystr\"om or smoothing formulations of KAF.      

Possible directions for future research include data-informed methods for kernel design that bias the leading eigenspaces of the corresponding integral operators such that they capture the response variable with minimal loss, thus improving the performance of Nystr\"om-based forecasting in high-dimensional covariate spaces. In addition, the efficacy of delay-coordinate maps in improving prediction skill of non-Markovian time series motivates the development of non-parametric kernel-based methodologies to estimate individual terms in the MZ equation.

\section*{Acknowledgments}

The research of JH was partially supported under the NSF grant DMS-1854299. DG acknowledges support from NSF grant 1842538, NSF grant DMS-1854383, and ONR grant N00014-16-1-2649. The authors thank Jonathan Poterjoy for sharing the EnKF and 4D-Var codes which he is developing as part of his NSF CAREER award.

\bibliographystyle{plain}


\end{document}